\documentclass[a4paper,onecolumn,11pt,accepted=2024-10-11]{quantumarticle}
\pdfoutput=1

\usepackage[english]{babel}
\usepackage[T1]{fontenc}
\usepackage{graphicx}
\usepackage{dcolumn}
\usepackage{bm}
\usepackage{hyperref}
\usepackage{placeins}
\usepackage{xcolor}
\usepackage{dsfont}
\usepackage{amsmath}
\usepackage{amssymb}
\usepackage{amsthm}
\usepackage{soul}
\newtheorem{theorem}{Theorem}
\newtheorem{lemma}[theorem]{Lemma}
\newtheorem{definition}[theorem]{Definition}

\begin{document}

\title{Limitations for Quantum Algorithms to Solve Turbulent and Chaotic Systems}

\author{Dylan Lewis}
\affiliation{Department of Physics and Astronomy, University College London, London WC1E 6BT, United Kingdom}

\author{Stephan Eidenbenz}
\affiliation{Los Alamos National Laboratory, Los Alamos, NM, USA}

\author{Balasubramanya Nadiga}
\affiliation{Los Alamos National Laboratory, Los Alamos, NM, USA}

\author{Yi\u{g}it Suba\c{s}\i}
\affiliation{Los Alamos National Laboratory, Los Alamos, NM, USA}

\begin{abstract}
We investigate the limitations of quantum computers for solving nonlinear dynamical systems. In particular, we tighten the worst-case bounds of the quantum Carleman linearisation (QCL) algorithm [Liu et al., PNAS 118, 2021], answering one of their open questions. We provide a further significant limitation for any quantum algorithm that aims to output a quantum state that approximates the normalized solution vector. Given a natural choice of coordinates for a dynamical system with one or more positive Lyapunov exponents and solutions that grow sub-exponentially, we prove that any such algorithm has complexity scaling at least exponentially in the integration time. As such, an efficient quantum algorithm for simulating chaotic systems or regimes is likely not possible.
\end{abstract}

\maketitle

\section{Introduction}
Finding, modelling, and solving differential equations constitutes a significant and essential part of science, with foundational physics such as quantum mechanics being described through the Schrödinger and Dirac equations, electromagnetism through Maxwell's equations, the classical physics of sound and heat through the wave and diffusion equations, the dynamics of fluid flows through the Navier-Stokes equations, etc.. Solving these systems efficiently has been a cornerstone of computational physics~\cite{klingenberg_grand_2013}. 

Since the important discovery of an efficient quantum algorithm to solve linear systems~\cite{harrow_quantum_2009}, and subsequent improvements~\cite{ambainis_variable_2010, childs_quantum_2017, wossnig_quantum_2018}, there has been considerable research effort to find algorithms for solving linear ordinary differential equations (ODEs)~\cite{berry_high-order_2014, berry_quantum_2017, costa_quantum_2019, childs_quantum_2020,krovi_improved_2023} and linear partial differential equations (PDEs)~\cite{clader_preconditioned_2013, cao_quantum_2013, montanaro_quantum_2016,engel_quantum_2019,arrazola_quantum_2019, linden_quantum_2022, childs_high-precision_2021}. Recently, there have also been advances in quantum algorithms for nonlinear PDEs~\cite{liu_efficient_2021, xue_quantum_2021, liu_efficient_2023}. Nonlinear PDEs are capable of describing a wide range of dynamics---including, for example, formation of coherent structures, cascades in scale space, deterministic chaos (sensitive dependence on initial conditions), self-organised criticality and turbulence\footnote{Here we distinguish between chaotic and turbulent flows in that while all turbulent flows are chaotic, not all chaotic flows are turbulent: A flow is chaotic if at least one of its Lyapunov exponents is positive, whereby the characteristics of a chaotic flow can be low dimensional. A turbulent flow tends to be a high dimensional chaotic system with a broad band energy spectrum resulting from a cascade of energy across scales, and wherein vorticity plays an important role.}---that underlie the behavior of a variety of complex systems. For modelling increasingly large systems and problem sizes, efficient algorithms are desirable---algorithms that have complexity that scales polynomially in integration time. An early attempt to design quantum algorithms for solving general nonlinear PDEs led to an algorithm whose complexity scales exponentially in integration time~\cite{leyton_quantum_2008}.
In fact, efficient simulation of nonlinear dynamics is generally believed to be not possible as it would allow strict complexity bounds to be violated~\cite{abrams_nonlinear_1998}. In particular, it would allow quantum states to be discriminated with exponentially fewer resources than otherwise possible~\cite{childs_optimal_2016}---a fact we use here to prove the worst-case complexity bound for solving nonlinear differential equations. For these reasons the general algorithm of Ref.~\cite{lloyd_quantum_2020}, based on the simulation of the nonlinear Schrödinger equation, will give exponential time scaling when the nonlinearity is too great. 

The quantum Carleman linearisation (QCL) algorithm~\cite{liu_efficient_2021} uses Carleman embedding~\cite{forets_explicit_2017} to linearise the nonlinear system. Ultimately, quantum mechanics is linear, and solving nonlinear partial differential equations efficiently equates to discretising then linearising a nonlinear differential equation such that it is possible to use a quantum algorithm for solving linear systems of equations~\cite{harrow_quantum_2009, ambainis_variable_2010, childs_quantum_2017, wossnig_quantum_2018}. Rigorous bounds for the regime of an efficient quantum algorithms therefore require that the nonlinear term is not too considerable---the effective ratio of the nonlinear to linear term is defined in the following paragraph.

A nonlinear partial differential equation is discretised, e.g., using the method of finite differences, by taking the continuous variable, say, the function $u(x,t)$ of space and time, $x$ and $t$, and overlaying a finite grid. In one dimension, $x = m\Delta x$ where $m$ is the coordinate and $\Delta x$ is the grid spacing. The variable $u(x,t)$ for a specific $x$ becomes $u_m(t)$, with $\bm{u}(t) = (u_1(t), u_2(t), \dots, u_N(t))$, for $N$ grid points. Discretised derivatives with respect to $x$ essentially use the definition of the derivative between grid points without taking the limit $\Delta x \rightarrow 0$, for example $\frac{\partial u}{\partial x} \rightarrow \frac{u_{i+1} - u_{i-1}}{2 \Delta x}$. 

By adding degrees of freedom (lifting), nonlinear differential equations with a wide variety of nonlinearities, after discretisation using a wide variety of methods, can be written in the form~\cite{hernandez-bermejo_algebraic_1998, gu_qlmor_2011}
\begin{align}
    \frac{d \bm{u}}{dt} = F_2 {\bm{u}}^{\otimes 2} + F_1 \bm{u} + F_0(t),
\end{align}
where the solution vector is $\bm{u} \in \mathbb{R}^N$, with initial condition $\bm{u}(0)$. There is a time-independent nonlinear coefficient, the matrix $F_2 \in \mathbb{R}^{N\times N^2}$, a time-independent linear coefficient, $F_1 \in \mathbb{R}^{N\times N}$, and the time-dependent driven inhomogeneity $F_0(t) \in \mathbb{R}^N$. We consider a restricted set of differential equations where the real part of the eigenvalues of $F_1$ are $\textrm{Re}(\lambda_1) \leq \textrm{Re}(\lambda_2) \leq \dots \leq \textrm{Re}(\lambda_N) < 0$. A Reynolds number-like parameter is also defined that characterises the relative strength of the nonlinearity to the linearity:
\begin{align}
    \label{eq:R_definition}
    R = \frac{1}{\vert \textrm{Re}(\lambda_N) \vert} \left( \Vert F_2 \Vert \Vert \bm{u}(0) \Vert + \frac{\Vert F_0 \Vert}{\Vert \bm{u}(0) \Vert} \right),
\end{align}
where the spectral norm---the largest singular value in the singular value decomposition---is used for $\Vert F_2 \Vert$, the Euclidean norm for $\Vert \bm{u}(0) \Vert$, and the maximum value of the Euclidean norm over the time interval considered is used for $\Vert F_0(t) \Vert$. 

The QCL algorithm uses Carleman linearisation after spatial discretisation. The Carleman embedding treats higher-order terms, i.e. $\bm{u}^{\otimes k}$ for $k \in \mathds{Z}^+$, as independent variables, thereby transforming a finite-dimensional nonlinear system into an infinite dimensional linear system,
\begin{align}
    \frac{d \bm{u}^{\otimes 2}}{d t} &= \bm{u} \otimes \frac{d \bm{u}}{d t} + \frac{d \bm{u}}{d t} \otimes \bm{u} \\
    \frac{d \bm{u}^{\otimes 3}}{d t} &= \bm{u} \otimes \frac{d \bm{u}^{\otimes 2}}{d t} + \frac{d \bm{u}}{d t} \otimes \bm{u}^{\otimes 2} \\
    & ~~\vdots \nonumber
\end{align}
After Carleman linearisation, we have a first order system
\begin{align}
    \frac{d \bm{y}}{d t} = A(t) \bm{y} + \bm{b}(t),
\end{align}
where the infinite dimensional vectors $\bm{y} = (\bm{u}, \bm{u}^{\otimes 2}, \bm{u}^{\otimes 3}, \dots)$ and $\bm{b}(t) = (F_0(t), 0, 0, \dots)$ have been defined. $A(t)$ is an infinite-dimensional matrix that contains $F_2$, $F_1$, and $F_0(t)$ terms in a tridiagonal block structure. This structure mixes vertically adjacent terms in the $\bm{y}$ vector---terms with one more and one fewer product of $\bm{u}$. The linear system is truncated at Carleman truncation order, $C$, giving $\bm{y}_C$, where increasing $C$ decreases the error. The system is then discretised in time on the interval $[0,T]$, via a procedure such as the forward Euler method, with sufficiently small time steps $\delta t$ such that 
\begin{align}
    \bm{y}(\delta t) &\approx \left[ I + A(0)\delta t \right] \bm{y}(0) + \bm{b}(0) \\
    \bm{y}(2 \delta t) &\approx \left[ I + A(\delta t)\delta t \right] \bm{y}(\delta t) + \bm{b}(\delta t) \\
    &~~\vdots \nonumber
\end{align}
Every time step can then be solved simultaneously as the linear system 
\begin{align}
    -(I + \hat{A})\hat{\bm{y}} &= \hat{\bm{b}}
\end{align}
with $\hat{A}$, a large upper bidiagonal block matrix, $\hat{\bm{y}} = (\bm{y}(0), \bm{y}(\delta t), \bm{y}(2\delta t), \dots, \bm{y}(T))$, and $\hat{\bm{b}} = (\bm{b}(0), \bm{b}(\delta t), \bm{b}(2\delta t), \dots, \bm{b}(T))$.
In \cite{liu_efficient_2021} the error of the QCL algorithm was shown to be bounded for arbitrary times provided $R<1$ and solutions are dissipative $\Vert \bm{u}(t) \Vert \leq \Vert \bm{u}(0) \Vert$.
Under the additional assumption that eigenvalues of $F_1$ are negative the QCL algorithm provides an exponential advantage over current classical algorithms in terms of the dimension of the system of equations.

Ref. \cite{liu_efficient_2021} also established worst-case complexity lower bounds for $R\ge \sqrt{2}$,
leaving the region $1 \leq R < \sqrt{2}$ an open question. 
In this paper we close this gap by constructing problems intractable to QCL algorithm with $R\ge 1$.
In particular, we find coupled differential equations that allows state discrimination in integration time $O(\log(1/\varepsilon))$, for initial state overlap $1-\varepsilon$. Since state discrimination has complexity $\Omega(1/\varepsilon)$ any quantum algoithm must have exponential complexity in integration time.

By ``solving'' a PDE using a quantum computer we mean outputting an approximation to a quantum state proportional to the solution vector $\bm{u}(t)$, an amplitude encoded state,
\begin{equation}
\label{eq:amplitude_encoding}
    |u (t)\rangle = \frac{\sum_{i=1}^{n} u_i(t) |i\rangle }{\Vert \bm{u}(t) \Vert}.
\end{equation}
Our results apply to quantum algorithms that aim to solve nonlinear differential equations in this specific way. The QCL algorithm falls into this category. If only a particular property of the flow is required, such as the bulk velocity, it may not be necessary to resolve the small scale structure of the flow. Simulating the mean field solution may be sufficient to extract the desired property, essentially ignoring any turbulence. In fact, extracting this property could be equivalent to solving a simpler non-turbulent flow. We define ``solving" the differential equation as a solution that, with the norm of the solution, would allow any property of the flow to be extracted. For properties related to the turbulence, it may be required to ``solve" the differential equation. The complexity of the QCL algorithm depends on convergence of the Carleman embedding with the truncation order. In general, for $R\geq 1$, truncation order $C$ scales exponentially with time $t$. However, despite the worst-case bound, there are examples of PDEs with $R\ge 1$ that can be linearised with Carleman embedding for arbitrary time and bounded error. Many PDEs with $R \ge 1$ therefore also permit an efficient quantum algorithm. On the other hand, in Section~\ref{sec:positive_lyapunov_no_advantage} we show that for a large class of PDEs no efficient quantum algorithm exists that encodes chaotic solutions in a natural coordinate system. 

\section{\label{sec:tight_lower_bound}Proof of tight lower bound}
Suppose there are two states with large overlap, $1-\varepsilon$, that a black box can generate and we want to determine which of the two states the black box outputs. The overlap of two quantum states cannot be increased through quantum operations. Thus, multiple queries to the black box, or copies of the state, are required to discriminate the two quantum states. 
This state discrimination task cannot be performed in fewer queries than $\Omega(1/\varepsilon$). We give the proof of this in the following Lemma. 
\begin{lemma}
\label{lemma:helstrom_bound}
[Lemma 6 of Ref.~\cite{liu_efficient_2021}] Given a black box that prepares either $|\psi\rangle$ or $|\phi\rangle$, with $\vert \langle \psi | \phi\rangle \vert = 1 - \varepsilon$, then $\Omega(1/\varepsilon)$ queries are required to distinguish the two states. 
\end{lemma}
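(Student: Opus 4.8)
The plan is to reduce the distinguishing task to the standard Helstrom problem of discriminating two fixed pure states, and then track how the overlap degrades under repeated queries. First I would observe that $n$ queries to the black box produce the state $|\psi\rangle^{\otimes n}$ in one case and $|\phi\rangle^{\otimes n}$ in the other; any additional unitaries, ancillas, or intermediate measurements performed by the algorithm can be deferred and absorbed into a final two-outcome measurement, so the best achievable success probability is exactly the Helstrom value for the pair $\{|\psi\rangle^{\otimes n}, |\phi\rangle^{\otimes n}\}$. Since inner products multiply across tensor factors,
\begin{equation}
\bigl| \langle \psi |^{\otimes n} \, |\phi\rangle^{\otimes n} \bigr| = \bigl| \langle \psi | \phi \rangle \bigr|^{n} = (1-\varepsilon)^{n}.
\end{equation}

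Next I would invoke the Helstrom bound: for two pure states with overlap $F$, prepared with equal priors, the optimal probability of correct identification is $p^{*} = \tfrac{1}{2}\bigl(1 + \sqrt{1 - F^{2}}\bigr)$. Setting $F = (1-\varepsilon)^{n}$ and demanding that the algorithm succeed with a constant probability bounded away from $1/2$ --- say $p^{*} \geq 2/3$ --- forces $(1-\varepsilon)^{2n} \leq 8/9$, i.e. $2n\,\ln\!\bigl(1/(1-\varepsilon)\bigr) \geq \ln(9/8)$. Using $\ln\!\bigl(1/(1-\varepsilon)\bigr) \leq 2\varepsilon$ for $\varepsilon \leq 1/2$ yields $n \geq \ln(9/8)/(4\varepsilon) = \Omega(1/\varepsilon)$, which is the claim; for $\varepsilon$ not small the bound is trivially satisfied, and the constant can be tuned to any target success probability.

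The only real subtlety --- and what I would treat most carefully --- is the reduction in the first step: justifying that an adaptive strategy interleaving queries with processing cannot beat a single joint measurement on the $n$ collected copies. This follows because the entire protocol, viewed as a fixed (input-independent) quantum channel applied to whichever state the oracle loads, is a particular case of a general POVM on $|\cdot\rangle^{\otimes n}$, and the Helstrom bound is optimal over all POVMs; equivalently, the trace distance is monotone under the channel, so distinguishability is controlled by $\bigl\| \, |\psi\rangle\!\langle\psi|^{\otimes n} - |\phi\rangle\!\langle\phi|^{\otimes n} \bigr\|_{1} = 2\sqrt{1-(1-\varepsilon)^{2n}}$, which is $o(1)$ unless $n = \Omega(1/\varepsilon)$. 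Everything else is the elementary inequality manipulation above.
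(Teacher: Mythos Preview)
Your proof is correct and follows essentially the same approach as the paper's: both compute the overlap $(1-\varepsilon)^{n}$ of the tensor-power states, invoke the Helstrom bound (equivalently, the trace distance $\sqrt{1-(1-\varepsilon)^{2n}}$), and conclude that $n=\Omega(1/\varepsilon)$ is needed for a constant success bias. Your version is in fact more careful than the paper's, since you explicitly justify why adaptive query strategies cannot help (via CPTP monotonicity of the trace distance), a point the paper leaves implicit.
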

\begin{proof}
The overlap of two states $|\psi\rangle$ and $|\phi\rangle$ is $\vert \langle \psi | \phi\rangle \vert = 1 - \varepsilon$. With $k$ queries to the black box, oracle $O_s$, we can prepare either state $|\psi\rangle^{\otimes k}$ or $|\phi\rangle^{\otimes k}$. The overlap of these states is $(1-\varepsilon)^k$. For pure states the trace distance is $d = \Vert |\psi\rangle^{\otimes k}  - |\phi\rangle^{\otimes k}  \Vert_\textrm{tr} = \sqrt{1-\vert \langle \psi | \phi \rangle \vert^{2k}}=\sqrt{1-(1-\varepsilon)^{2k}}$ and satisfies $d \leq \sqrt{2k\varepsilon}$.
At least $k = \Omega(1/\varepsilon)$ queries are required to achieve $O(1)$ distance $d$, and therefore a bounded-error separation between the states for distinguishability. The Helstrom bound~\cite{helstrom_quantum_1969} states that the trace distance can be used to bound the success probability of any state discrimination protocol.
\end{proof}

First, we assume---and we give an explicit example in the following Theorem~\ref{theorem:R_greater_than_1}---that there are pairs of solutions to nonlinear differential equations that when simulated as a quantum state would lead to two states with initial overlap of $1-\varepsilon$ being reduced to an overlap of $1-O(1)$ in $O(\log(1/\varepsilon))$ simulated time. The result of Lemma~\ref{lemma:helstrom_bound} leads to the conclusion that the number of copies of the initial state needed by this quantum algorithm is exponential in the simulation time. In the following Lemma we formalise and prove this statement.

\begin{lemma}
\label{lemma:time_to_reduce_overlap}
A black box generates one of two states $|\psi_0\rangle$ or $|\phi_0\rangle$ with overlap $\vert \langle \psi_0 | \phi_0 \rangle \vert = 1-\varepsilon$ with $0 < \varepsilon < e^{-4} \approx 0.018$. If the overlap of these two states under evolution that simulates the solution of a differential equation is reduced to $\vert \langle \psi(T) | \phi(T) \rangle\vert \leq 0.95$ in time $T = O(\log(1/\varepsilon))$, then the number of queries to the black box needed to solve the differential equation is exponential in $T$.
\end{lemma}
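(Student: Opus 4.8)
The plan is a reduction to Lemma~\ref{lemma:helstrom_bound}. Suppose, towards a contradiction, that there is a quantum algorithm $\mathcal{A}$ that ``solves'' the differential equation at time $T$, i.e.\ outputs a state approximating the normalised solution to within a fixed (sufficiently small) constant error, using only $N$ queries to the black box that supplies the initial state. A single run of $\mathcal{A}$ on the black box returns one copy of $|\tilde\psi(T)\rangle$ or $|\tilde\phi(T)\rangle$, a state whose Fubini--Study angle to the true time-$T$ solution $|\psi(T)\rangle$ (resp.\ $|\phi(T)\rangle$) is a small constant. Since by hypothesis $|\langle \psi(T) | \phi(T)\rangle| \le 2/\sqrt{5}$, the two exact evolved states subtend an angle at least $\arccos(2/\sqrt{5})$, and the triangle inequality for the Fubini--Study angle then gives $|\langle \tilde\psi(T) | \tilde\phi(T) \rangle| \le 1 - \Omega(1)$: the algorithm's outputs are still a constant angle apart.

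From here I would amplify and invoke Helstrom as in the proof of Lemma~\ref{lemma:helstrom_bound}. Running $\mathcal{A}$ a constant number $k_0$ of times produces $|\tilde\psi(T)\rangle^{\otimes k_0}$ versus $|\tilde\phi(T)\rangle^{\otimes k_0}$, whose overlap is at most $(1-\Omega(1))^{k_0}$ and hence, for a suitable constant $k_0$, bounded away from $1$; by the Helstrom bound these are discriminated with bounded error. Since the time-$T$ solution is determined by the initial state, discriminating $|\tilde\psi(T)\rangle$ from $|\tilde\phi(T)\rangle$ is tantamount to deciding whether the black box prepared $|\psi_0\rangle$ or $|\phi_0\rangle$ --- and this was done using only $k_0 N = O(N)$ queries. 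But the initial pair has overlap $1-\varepsilon$, so Lemma~\ref{lemma:helstrom_bound} forces $O(N) = \Omega(1/\varepsilon)$, i.e.\ $N = \Omega(1/\varepsilon)$.

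To finish, the hypothesis $T = O(\log(1/\varepsilon))$ means $T \le c\log(1/\varepsilon)$ for a fixed constant $c>0$ (the restriction $\varepsilon < e^{-3}$ just places $T$ in the regime where this asymptotic statement is meaningful), so $1/\varepsilon \ge e^{T/c}$ and therefore $N = \Omega(e^{T/c})$, exponential in $T$, as claimed. I expect the only genuinely delicate step to be the error bookkeeping in the first paragraph: one has to be sure that the constant error budget with which $\mathcal{A}$ is allowed to solve the equation is small enough that, after the triangle-inequality loss, $|\tilde\psi(T)\rangle$ and $|\tilde\phi(T)\rangle$ remain a constant angle apart --- which is exactly why the hypothesis pins the post-evolution overlap at the concrete value $2/\sqrt{5}$ rather than at something arbitrarily close to $1$. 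Everything after that is a direct chaining of Lemma~\ref{lemma:helstrom_bound} with the exponential relation between $T$ and $1/\varepsilon$.
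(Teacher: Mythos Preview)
Your proposal is correct and is morally the same reduction to Lemma~\ref{lemma:helstrom_bound} that the paper gives, but the paper executes it more compactly. Rather than running the algorithm $k_0$ times and tracking the Fubini--Study angle of its approximate outputs, the paper simply observes that the entire algorithm is a CPTP map from the $k$ supplied copies of the initial state to the output state; since trace distance is monotone under CPTP maps, the output pair can have trace distance at most $\sqrt{2k\varepsilon}$, and for this to reach the required $1/\sqrt{5}$ one needs $k>1/(10\varepsilon)$. This one-line monotonicity argument absorbs both of your delicate steps --- the error bookkeeping (the map is whatever the algorithm actually does, approximation error and all) and the amplification by $k_0$ runs --- so the paper never has to introduce $|\tilde\psi(T)\rangle$, the triangle inequality for angles, or repeated runs. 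Your version is a legitimate unpacking of the same idea, and your explicit attention to the algorithm's error tolerance is arguably a useful gloss the paper omits, but you can shorten the whole thing to the monotonicity observation.
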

\begin{proof}
By assumption, the trace distance between the two states is
\begin{align}
    \Vert |\psi(T)\rangle - |\phi(T)\rangle\Vert_\textrm{tr} &= \sqrt{1-|\langle \psi(T)| \phi(T)\rangle|^2} \\
    &\geq \frac{\sqrt{39}}{20}
\end{align}
after simulated time $T = O(\log(1/\varepsilon))$. 
As trace distance is non-increasing under quantum operations, sufficient number of copies of the two states must be provided to the quantum algorithm. From $\Vert |\psi(0)\rangle^{\otimes k} - |\phi(0)\rangle^{\otimes k}\Vert_{\textrm{tr}} \le \sqrt{2 k \varepsilon}$, we require $k > 39/(800\varepsilon)$ queries to give a separation of $\sqrt{39}/20$. Since $\varepsilon$ is exponentially small in integration time $T$, simulating up to $T$ requires a number of queries exponentially large in $T$.
\end{proof}

In the following, we use Lemmas~\ref{lemma:helstrom_bound} and~\ref{lemma:time_to_reduce_overlap} to prove that for $R \geq 1$ there is a nonlinear differential equation that cannot be solved efficiently by any quantum algorithm.
Unlike the proof of Lemma 7 in Ref.~\cite{liu_efficient_2021}, which uses uncoupled differential equations, in Theorem~\ref{theorem:R_greater_than_1} we use two coupled differential equations to prove the worst-case lower bound. 
\begin{theorem}
\label{theorem:R_greater_than_1}
There is a nonlinear ODE with $R\geq 1$, as defined in Eq.~\eqref{eq:R_definition}, such that any bounded-error quantum algorithm that gives a quantum state with amplitudes proportional to the solution vector must have complexity exponential in integration time. 
\end{theorem}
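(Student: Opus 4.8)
The plan is to exhibit one explicit two--component nonlinear ODE of the required form, together with a one--parameter family of pairs of initial conditions, and then to invoke Lemmas~\ref{lemma:helstrom_bound} and~\ref{lemma:time_to_reduce_overlap}. I would take a ``driver'' variable $u_1$ obeying the scalar Riccati equation $\dot u_1 = -u_1 + u_1^2$, whose fixed point $u_1=1$ has linearisation $+1$ and is therefore \emph{unstable}, and a ``responder'' variable $u_2$ obeying $\dot u_2 = -u_2 + \alpha\,u_1 u_2$ with $\alpha = 1+\mu$ for a small $\mu\in(0,1/2)$. I would encode the bit in the driver alone: the two initial conditions are $(u_1(0),u_2(0)) = (1+\delta,\eta)$ and $(1-\delta,\eta)$, where for $u_1(0)=1+\delta$ the driver grows and blows up at $t^\ast = \log((1+\delta)/\delta) = O(\log(1/\delta))$, whereas for $u_1(0)=1-\delta$ it decays, $u_1(t)\to 0$ like $e^{-t}$, and $\int_0^\infty u_1\,\mathrm{d}s$ converges. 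Both equations are solvable in closed form --- the driver by the standard Riccati substitution, the responder by the quadrature $u_2(t) = \eta\,e^{-t}\exp(\alpha\int_0^t u_1\,\mathrm{d}s)$ --- so every estimate below reduces to a short computation rather than a qualitative argument.

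The dynamical core is the claim that by a time $T = O(\log(1/\delta))$, kept strictly below $t^\ast$ so that the solution genuinely exists on $[0,T]$, the two normalised solution vectors $\bm u(t)/\Vert\bm u(t)\Vert$ point in nearly orthogonal directions. In the blow--up branch $\int_0^t u_1\,\mathrm{d}s$ diverges as $t\uparrow t^\ast$ and, because $\alpha>1$, the ratio $u_2/u_1$ diverges with it, so the normalised state approaches $(0,1)$; in the decay branch $\int_0^\infty u_1\,\mathrm{d}s$ converges while $u_2(0)=\eta$ is small relative to $\delta^\mu$, so $u_1/u_2\to\infty$ and the normalised state approaches $(1,0)$. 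From the closed forms one checks that, at $T = t^\ast - s$ for a suitably small cutoff $s$ that is polynomially small in $\delta$, both branches have essentially reached these limits, so $\vert\langle\psi(T)\vert\phi(T)\rangle\vert\le 2/\sqrt5$ holds with room to spare, with $T = O(\log(1/\delta))$.

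It then remains to discharge the quantitative hypotheses of Lemma~\ref{lemma:time_to_reduce_overlap} and of the Theorem. The two initial unit vectors $(1\pm\delta,\eta)/\Vert\cdot\Vert$ have overlap $1-\varepsilon$ with $\varepsilon$ polynomially small in $\delta$ (so $\varepsilon<e^{-3}$ for small $\delta$ and $T = O(\log(1/\varepsilon))$). For the coefficients, $F_0=0$ and $F_1 = -I$, so every eigenvalue equals $-1<0$ and $\vert\mathrm{Re}(\lambda_N)\vert = 1$; writing $F_2$ in its symmetrised form (splitting the $\alpha u_1u_2$ term equally between the two copies of $u_1u_2$ in $\bm u^{\otimes 2}$) its two nonzero rows are orthogonal with norms $1$ and $\alpha/\sqrt2$, hence $\Vert F_2\Vert = 1$ for $\alpha\le\sqrt2$, and $R = \Vert\bm u(0)\Vert$. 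Finally, choosing $\eta$ in the (nonempty, for $\mu<1/2$) window $\sqrt{2\delta}\lesssim\eta\ll\delta^\mu$ makes $\Vert\bm u(0)\Vert^2 = (1\pm\delta)^2 + \eta^2 \ge 1$ for \emph{both} initial conditions --- indeed $R$ can be taken arbitrarily close to $1$ from above --- so $R\ge 1$ as required. Lemma~\ref{lemma:time_to_reduce_overlap} then gives the exponential lower bound on the number of queries.

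I expect the main obstacle --- and the reason the construction must be coupled rather than the uncoupled one used for Lemma~7 of Ref.~\cite{liu_efficient_2021} --- to be pushing $R$ down to $1$ while keeping genuine exponential separation: in an uncoupled construction both components must sit near the separatrix value $1$, forcing $\Vert\bm u(0)\Vert\approx\sqrt2$ and hence $R\approx\sqrt2$, whereas with one driver and one responder only a single component sits near the separatrix. The price is a delicate simultaneous choice of the coupling strength $\alpha=1+\mu$, the perturbation size $\delta$, and the responder magnitude $\eta$, together with the verification that the separation in the blow--up branch, which only completes as $t\uparrow t^\ast$, still finishes within integration time $O(\log(1/\varepsilon))$ --- that is, that the cutoff $s = t^\ast - T$ can be taken only polynomially small in $\delta$, not exponentially small.
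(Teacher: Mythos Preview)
Your proposal is correct and reaches the same conclusion via a genuinely different construction from the paper. Both proofs share the same skeleton: a two--component system whose first component obeys the Riccati driver $\dot u_1=-u_1+u_1^2$ with its unstable fixed point at $u_1=1$, coupled to a responder $u_2$ that converts the driver's exponential sensitivity into an $O(1)$ angular separation of the normalised solution vectors in time $O(\log(1/\varepsilon))$, after which Lemma~\ref{lemma:time_to_reduce_overlap} finishes. The paper, however, uses an \emph{additive linear} coupling $\dot u_2=-u_2-u_1$ and places the two initial conditions \emph{at} the fixed point ($u_1(0)=1$) and just \emph{below} it ($u_1(0)=1-\varepsilon$); both trajectories are globally bounded, both have $\Vert\bm u(0)\Vert=1$ exactly, and hence $R=R^*=1$ on the nose for either branch. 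Your \emph{multiplicative bilinear} coupling $\dot u_2=-u_2+\alpha u_1u_2$ with initial conditions on \emph{either side} of the separatrix trades this for a cleaner quadrature $u_2=\eta e^{-t}\exp(\alpha\!\int u_1)$ and a somewhat more transparent separation mechanism (ratios tend to $0$ and $\infty$), at the cost of three things: one trajectory blows up in finite time (so you must work strictly before $t^*$ and tune the cutoff $s$); the two initial conditions have slightly different norms and hence slightly different values of $R\in[1,1+O(\delta)]$ rather than both equalling~$1$; and you need the auxiliary parameter $\mu\in(0,1/2)$ together with the window $\sqrt{2\delta}\lesssim\eta\ll\delta^\mu$. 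One practical consequence is that the paper's bounded--trajectory construction also rules out algorithms that are promised globally existing (or uniformly bounded) solutions, a promise your blow--up branch would violate; your argument, while valid for the theorem as stated, does not cover that stronger setting.
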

\begin{proof}
We define a coupled differential equation with solution vector $\bm{u}(t) = (u_1(t), u_2(t))^\intercal$,
\begin{align}
    \label{eq:1}
    \frac{d u_1}{dt} &= - u_1 + R^* u_1^2 \\
    \label{eq:2}
    \frac{d u_2}{dt} &= - 2 u_2 - u_1.
\end{align}
The general solution is 
\begin{align}
    \label{eq:general_solution}
    \bm{u} (t) = \begin{pmatrix}
        \left[R^* + e^{t+c_1}\right]^{-1} \\
        e^{-2 (t+c_1)} \left(e^{2 c_1}c_2 - e^{t+c_1} +R^*\log(R^* + e^{t+c_1})\right)
    \end{pmatrix},
\end{align}
with constants $c_1$ and $c_2$ determined by the initial conditions.
For the initial conditions $u_1(0) = 1-\varepsilon$ and $u_2(0) = \delta= \sqrt{2\varepsilon - \varepsilon^2}$, with $0 \leq \varepsilon \ll 1$, and $\Vert \bm{u} \Vert = 1 $, we have 
\begin{equation}
\label{eq:general_solution_bcs}
    \bm{u} (t) = \begin{pmatrix}
        \left[R^* + e^{t} \frac{1-R^* (1-\varepsilon)}{1-\varepsilon}\right]^{-1} \\
        e^{-2t} \left( \frac{1 + \delta - R^* \delta - e^t(1-\varepsilon) - \varepsilon + R^* \delta\varepsilon}{1-R^* (1-\varepsilon)} + \frac{R^* (1-\varepsilon)^2 \log(R^*(1-\varepsilon)+e^t(1-R^* (1-\varepsilon)))}{{(1-R^* (1-\varepsilon))^2}}\right)
    \end{pmatrix}.
\end{equation}
From the definition of $R$ in Eq.~\eqref{eq:R_definition}, it follows that $R = R^*$. Consider the boundary case $R=1$, we have the analytic solution
\begin{equation}
\label{eq:u_evolution_epsilon}
    \bm{u}^\psi (t) = \begin{pmatrix}
        \left[1+\frac{\varepsilon e^t}{1-\varepsilon}\right]^{-1} \\
        \frac{e^{-2t}}{\varepsilon^2}\left[ \varepsilon(1-e^t + \varepsilon e^t -\varepsilon + \delta\varepsilon) + (1-\varepsilon)^2\log(1 -\varepsilon + e^{t}\varepsilon)\right]
    \end{pmatrix}.
\end{equation}
For $\varepsilon = 0$, so $\delta=0$, we define the solution $\bm{u}^\phi(t)$, and the second element $u_2^\phi(t)$ gives
\begin{align}
    u_2^\phi(t) &= \lim_{\varepsilon\rightarrow 0} \frac{e^{-2t}}{\varepsilon^2}\left[\varepsilon(1 - e^t +\varepsilon e^t - \varepsilon + \delta\varepsilon) + (1-\varepsilon)^2\log(1 -\varepsilon + e^{t}\varepsilon)\right] \\
    &= \lim_{\varepsilon\rightarrow 0} \frac{e^{-2t} \frac{d}{d\varepsilon} \left[ \varepsilon(1 - e^t +\varepsilon e^t - \varepsilon + \delta\varepsilon) + (1-\varepsilon)^2\log(1 -\varepsilon + e^{t}\varepsilon)\right] }{\frac{d}{d\varepsilon}\varepsilon^2} \\
    &= \lim_{\varepsilon\rightarrow 0} \frac{e^{-2t}}{2\varepsilon} \Bigg[ 1 + \varepsilon e^t - e^t  - \varepsilon + \delta\varepsilon +\varepsilon \left( e^t  - 1 + \delta + \frac{(1-\varepsilon)\varepsilon}{\delta}\right) \nonumber \\ & \hspace{4.5cm}  -2(1-\varepsilon)\log(1-\varepsilon+e^t\varepsilon) - \frac{(1-\varepsilon)^2 (1 - e^t)}{1-\varepsilon+e^t \varepsilon}  \Bigg]  \\
    &= \lim_{\varepsilon\rightarrow 0} \frac{e^{-2t} \left[ \varepsilon e^t  - \varepsilon +\varepsilon \left( e^t  - 1 \right) - 2\log(1-\varepsilon+e^t\varepsilon) + 1 - e^t  - \frac{(1-\varepsilon)^2 (1 - e^t)}{1-\varepsilon+e^t \varepsilon}  \right] }{2\varepsilon} \\
    &= \frac{e^{-2t}}{2} \Bigg\{ 2 \left( e^t - 1 \right) -  \lim_{\varepsilon\rightarrow 0} \frac{2\log(1-\varepsilon+e^t\varepsilon)}{\varepsilon} \nonumber \\ 
    &\hspace{5.5cm}  +  \lim_{\varepsilon\rightarrow 0} \left[ \frac{1}{\varepsilon}\left(1 - e^t - \frac{(1-\varepsilon)^2 (1 - e^t)}{1-\varepsilon+e^t \varepsilon} \right) \right] \Bigg\} \\
    &= \frac{e^{-2t}}{2} \Bigg\{ 2 \left( e^t - 1 \right) -  \lim_{\varepsilon\rightarrow 0} \frac{2 \frac{d}{d\varepsilon} \log(1-\varepsilon+e^t \varepsilon)}{\frac{d}{d\varepsilon} \varepsilon} \nonumber\\& \hspace{3.5cm}+   \lim_{\varepsilon\rightarrow 0} \left[ \frac{1}{\varepsilon}\left(1 - e^t - (1-\varepsilon)^2 (1 - e^t)(1+\varepsilon-e^t \varepsilon) \right) \right] \Bigg\} \\
    &=  \frac{e^{-2t}}{2} \left\{ 2\left(  e^t -1 \right) -  2 (e^t-1)  +  \lim_{\varepsilon\rightarrow 0} \left[ \frac{1}{\varepsilon}\left(1 - e^t - 1 + e^t + \varepsilon - e^{2t} \varepsilon \right) \right] \right\}  \\
     &=  \frac{e^{-2t}}{2} \left\{ \lim_{\varepsilon\rightarrow 0} \left[ \frac{1}{\varepsilon} \left( \varepsilon - e^{2t}\varepsilon \right) \right] \right\}  \\
    &= \frac{e^{-2t} - 1}{2},
\end{align}
where we have used l'Hôpital's rule.
We therefore have a solution
\begin{equation}
    \label{eq:u_evolution_0}
    \bm{u}^\phi (t) = \begin{pmatrix}
        1 \\
        \frac{1}{2}(e^{-2t} - 1)
    \end{pmatrix},
\end{equation}
for $\varepsilon = 0$, and a solution $\bm{u}^\psi (t)$ for small positive $\varepsilon$. 
These solutions, with entries $u^{\psi}_1$, $u^\phi_1$, and $u^\psi_2$, $u^\phi_2$, define two quantum states at initial time, $t=0$: $|\psi_0\rangle = (1-\varepsilon)|0\rangle + \delta|1\rangle$ and $|\phi_0 \rangle = |0\rangle$. The initial overlap is $\vert \langle \psi_0|\phi_0 \rangle \vert = 1-\varepsilon$. We define an oracle $O_t$ that updates the solution to time $t$, such that the states evolve according to the solutions of the differential equations up to normalisation. We further define ratios of the solutions $S_{\psi}(t) = \frac{u^{\psi}_{1}}{u^{\psi}_{2}}$ and $S_{\phi}(t) = \frac{u^{\phi}_{1}}{u^{\phi}_{2}} $. 
In particular, 
\begin{align}
    \label{eq:S_phi_expression}
    S_{\phi}(t) = \frac{-2}{1-e^{-2t}},
\end{align}
so $S_{\phi}(t) \leq -2$ and $\lim_{t\rightarrow\infty} S_
\phi (t) = -2$. 
After time $t$, the initial state is evolved by the oracle $|\phi(t)\rangle = O_t|\phi_0\rangle$ such that we have the state
\begin{align}
    |\phi(t)\rangle &= \frac{1}{\sqrt{1 + S_\phi (t)^2}}(S_\phi(t)|0\rangle + |1\rangle).
\end{align} 
We now compute $S_\psi(t)$ and its scaling with $\varepsilon$,
\begin{align}
    S_\psi(t) &= \frac{ \left[1+\frac{\varepsilon e^t}{1-\varepsilon}\right]^{-1} }{\frac{e^{-2t}}{\varepsilon^2}\left[ \varepsilon(1-e^t + \varepsilon e^t -\varepsilon + \delta\varepsilon) + (1-\varepsilon)^2\log(1 -\varepsilon + e^{t}\varepsilon)\right]}. \\
    &= \frac{e^{2t}(1-\varepsilon)\varepsilon^2}{(1+(-1+e^t)\varepsilon)\left[ \varepsilon(1-e^t + \varepsilon e^t -\varepsilon + \delta\varepsilon) + (1-\varepsilon)^2\log(1 -\varepsilon + e^{t}\varepsilon)\right]}
\end{align}
The ratio $S_\psi (t)$ tends to $-1$, $\lim_{t\rightarrow\infty}S_\psi (t) = -1$. The numerator $e^{2t}(1-\varepsilon)\varepsilon^2$ is positive for all time, as is the term $(1+(-1+e^t)\varepsilon)$ for $t>0$. The term $\varepsilon(1-e^t + \varepsilon e^t -\varepsilon + \delta\varepsilon)$ is negative for $t > \log(1 + \frac{\delta\varepsilon}{1-\varepsilon})$. The term is therefore negative for $t \geq \log(1/\varepsilon)$ if $\varepsilon\leq \frac{1}{2}$, and it monotonically decreases exponentially. The term $(1-\varepsilon)^2\log(1 -\varepsilon + e^{t}\varepsilon)$ is positive for $t>0$ and monotonically increases sub-exponentially---it grows linearly. Hence, by substituting in $t = \log(1/\varepsilon)$, we have that for time $t \geq \log(1/\varepsilon)$, the ratio $S_\psi (t)$ is negative (in fact, the ratio is negative much earlier than this time, but we do not have to consider earlier times). 
The absolute value of the ratio for $t \geq \log(1/\varepsilon)$ and $\varepsilon\leq \frac{1}{2}$ is therefore 
\begin{align}
    \vert  S_\psi(t) \vert = - S_\psi(t).
\end{align}
The ratio can be bounded for $t>\log(1/\varepsilon)$ by
\begin{align}
    |S_\psi (t)| &\leq \frac{e^{2t}(1-\varepsilon)\varepsilon^2}{(1+(-1+e^t)\varepsilon)\left[ \varepsilon(1-e^t + \varepsilon e^t) + (1-\varepsilon)^2\log(1 -\varepsilon + e^{t}\varepsilon)\right]} \\
    &\leq \frac{1}{1+\frac{ e^{-t} (1-3 \varepsilon +\varepsilon^2) -  e^{-2 t} (1-\varepsilon ) }{ \varepsilon(1-\varepsilon ) }-\frac{ e^{-2 t} (1-\varepsilon ) \left(\left(e^t-1\right) \varepsilon +1\right) }{  \varepsilon ^2} \log(1 -\varepsilon + e^{t}\varepsilon)}.
\end{align}
The denominator tends to 1 from below. The term $a(t) = \frac{ e^{-t} (1-3 \varepsilon +\varepsilon^2) -  e^{-2 t} (1-\varepsilon ) }{ \varepsilon(1-\varepsilon ) }$ is positive after time $t > \log(\frac{1-\varepsilon}{1-3\varepsilon+\varepsilon^2})$, which means the term is positive at times $t > \log(1/\varepsilon)$ for $\varepsilon<1-\frac{1}{\sqrt{2}}$. The term $b(t) = -\frac{ e^{-2 t} (1-\varepsilon ) \left(\left(e^t-1\right) \varepsilon +1\right) }{  \varepsilon ^2} \log(1 -\varepsilon + e^{t}\varepsilon)$ is negative for $t>0$ and monotonically tends to $0$. The denominator can thus be bounded from below for $\varepsilon<1-\frac{1}{\sqrt{2}}$ and $t > \log(1/\varepsilon)$ by $1+b(t)$. However, the absolute value of the ratio $\vert S_\psi (t) \vert$ will only be upper bounded when $1 + b(t) > 0$. At time $t = 2 \log(1/\varepsilon)$, we find 
\begin{align}
    b(t) &= -\varepsilon  \left((2-\varepsilon) \varepsilon ^2+1\right) \log\left(\frac{1}{\varepsilon }+1-\varepsilon\right) \\
    &> -\frac{1}{2} \label{eq:b_t_bound}
\end{align}
for $\varepsilon<1-\frac{1}{\sqrt{2}}$. 
Finally, this gives a bound on the ratio for $t> 2\log(1/\varepsilon)$ and $\varepsilon<1-\frac{1}{\sqrt{2}}$ of
\begin{align}
    |S_\psi (t)| &\leq \frac{1}{1-\frac{ e^{-2 t} (1-\varepsilon ) \left(\left(e^t-1\right) \varepsilon +1\right) }{ \varepsilon ^2} \log(1 -\varepsilon + e^{t}\varepsilon)} \\
    \label{eq:S_psi_bound}
    &\leq \frac{1}{1-\frac{t e^{- t}  }{\varepsilon}},
\end{align}
where we have used $b(t) \geq -\frac{t e^{-t}}{\varepsilon}$. Fig.~\ref{fig:S_psi_S_phi} shows a comparison of ratios $S_{\psi}(t)$ and $S_{\phi}(t)$ for various $\varepsilon$.
\begin{figure}[t]
    \centering
    \includegraphics[scale=0.46]{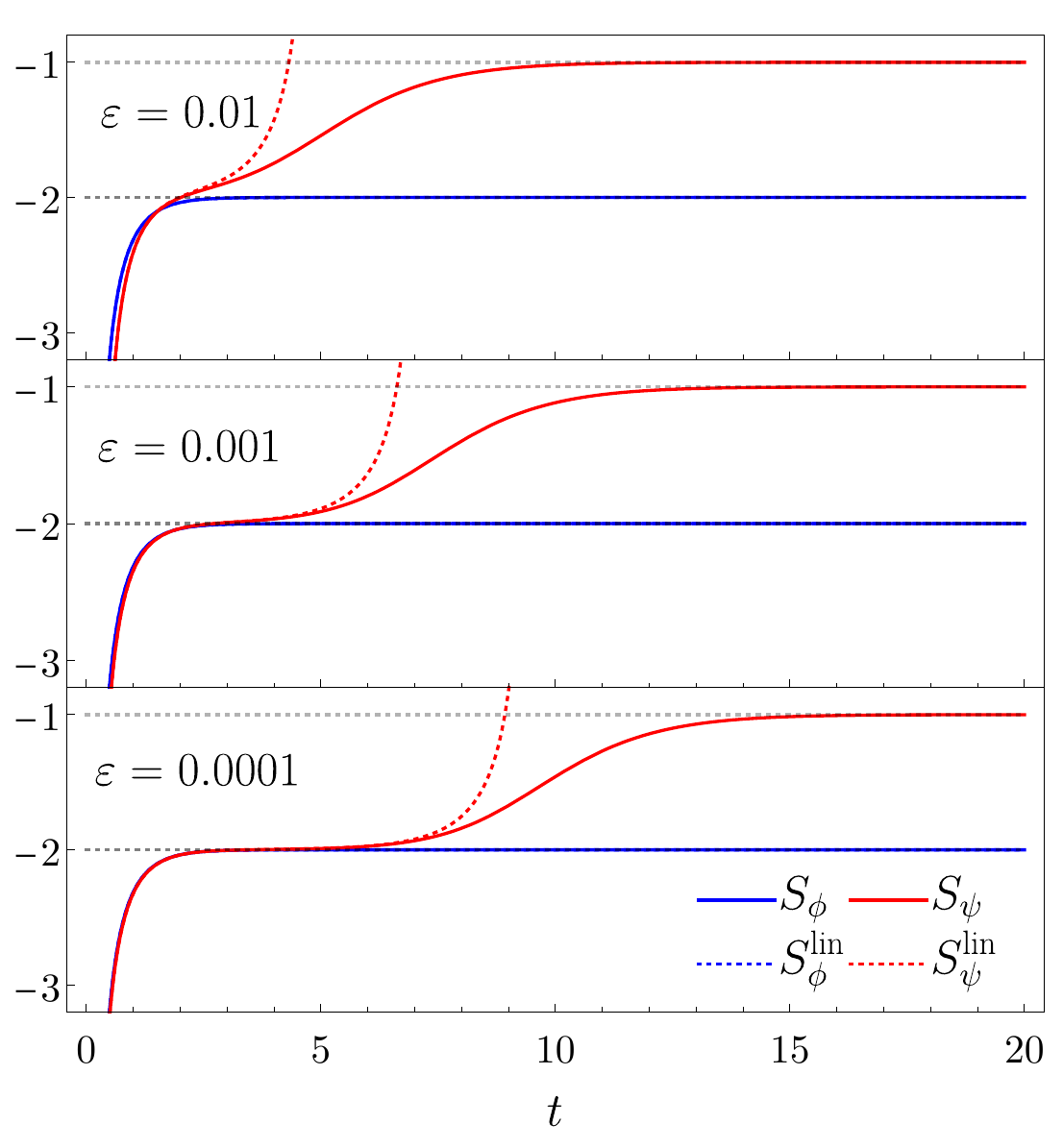}
    \caption{Comparison of ratios $S_{\psi}$ (red solid line) and $S_{\phi}$ (blue solid line) for $\varepsilon=0.01, 0.001, 0.0001$ up to time $t=20$. The behaviour for each $\varepsilon$ is similar. $S_{\phi}$ tends to $-2$ asymptotically and $S_{\psi}$ tends to $-1$. The dotted lines are the ratios of the solutions for the linearised equations around the initial condition. It can be seen that the time where the two lines start diverging scales logarithmically in $1/\varepsilon$ as proven in the text. The dotted lines show the solution to the linearised equations. $S_\phi$ and $S_\phi^\text{lin}$ exactly overlap.} 
    \label{fig:S_psi_S_phi}
\end{figure}
Similarly to $|\phi(t)\rangle$, after time $t$ with the initial condition of $|\psi_0\rangle$, we have the state 
\begin{align}
    |\psi(t) \rangle = \frac{1}{\sqrt{1+S_\psi(t)^2}}(S_\psi (t) |0\rangle + |1\rangle).
\end{align}
The magnitude of the overlap at time $t>2\log(1/\varepsilon)$, where both $S_\phi(t)$ and $S_\psi(t)$ are negative and real, is therefore
\begin{align}
    |\langle \psi (t)|\phi (t)\rangle| &= \frac{|S_\phi (t)| |S_\psi (t)| + 1}{\sqrt{S_\phi(t)^2 + 1}\sqrt{S_\psi(t)^2+1}}.
\end{align} 
Using Eq.~\eqref{eq:b_t_bound}, we find $|S_\psi(t)| < |S_\phi(t)|$ for $\varepsilon < 1 - \frac{1}{\sqrt{2}}$ and $t>2\log(1/\varepsilon)$, giving
\begin{align}
    \frac{\partial }{\partial \vert S_\psi (t) \vert}  \left(|\langle \psi (t)|\phi (t)\rangle| \right) &= \frac{\vert S_\phi (t) \vert-\vert S_\psi (t) \vert}{\sqrt{\vert S_\phi (t) \vert^2 + 1} \left(\vert S_\psi (t) \vert^2+1 \right)^{\frac{3}{2}}}, \\
    &> 0. \label{eq:overlap_derivative_sign}
\end{align}
The overlap $|\langle \psi (t)|\phi (t)\rangle|$ increases with increasing $|S_\psi (t)|$.
Hence, the upper bound of $|S_\psi(t)|$ gives an upper bound for the overlap. So taking $t^* = 3\log(1/\varepsilon)$, and using Eqs.~\eqref{eq:S_phi_expression} and~\eqref{eq:S_psi_bound} we have
\begin{align}
    |\langle \psi (t^*)|\phi (t^*)\rangle| \leq \frac{2 + (1-\varepsilon^6)(1-3\varepsilon^2 \log(1/\varepsilon)))}{\sqrt{4+(1-\varepsilon^6)^2} \sqrt{1+(1-3\varepsilon^2 \log(1/\varepsilon))^2}}.
\end{align}
For $0 < \varepsilon < e^{-4} \approx 0.018$, we therefore have the bound 
\begin{align}
    |\langle \psi (t^*)|\phi (t^*)\rangle| < 0.95
\end{align}
Using Lemma~\ref{lemma:time_to_reduce_overlap}, it is not possible to output these states with only polynomial queries---polynomial time.
There cannot be a polynomial time quantum algorithm to solve this coupled nonlinear differential equation for $R=1$. The same arguments apply for $R^* \geq 1$, and therefore $R \geq 1$. We can choose new initial conditions such that we have 
\begin{align}
    \bm{u}^{\phi} (0) = \left(\frac{1}{R}, \sqrt{1-\left(\tfrac{1}{R}\right)^2} \right)^\intercal,
\end{align}
and 
\begin{align}
    \bm{u}^{\psi} (0) = \left(\frac{1-\varepsilon_*}{R}, \sqrt{1-\left(\frac{1-\varepsilon_*}{R}\right)^2} \right)^\intercal,
\end{align}
where $\varepsilon_*$ is chosen such that the initial overlap is at least $1-\varepsilon$. If $\varepsilon_* = \varepsilon + \sqrt{\varepsilon(2-\varepsilon)(R^2-1)}$, then the initial overlap is $1-\varepsilon$, and the choice $\varepsilon_* = \frac{\varepsilon}{R}$ gives an initial overlap $|\langle \psi (0)|\phi (0) \rangle | \geq 1-\varepsilon$. The solutions become 
\begin{align}
    \bm{u}^\phi (t) = \begin{pmatrix}
        \frac{1}{R} \\
        \frac{1}{2R}\left(e^{-2t}\left(R+2\sqrt{R^2-1}\right)-R \right)
    \end{pmatrix},
\end{align}
and 
\begin{align}
    \bm{u}^\psi (t) = \begin{pmatrix}
        \frac{1}{R}\left[1+\frac{\varepsilon_* e^t}{1-\varepsilon_*}\right]^{-1} \\
        \frac{1}{R}\frac{e^{-2t}}{\varepsilon_*^2}\left[ \varepsilon_* (1-e^t + \varepsilon_* e^t -\varepsilon_* + \delta_*\varepsilon_*) + (1-\varepsilon_*)^2\log(1 -\varepsilon_* + e^{t}\varepsilon_*)\right]
    \end{pmatrix},
\end{align}
where $\delta_* = \sqrt{R^2 - (1-\varepsilon_*)^2} \leq R$ . Using similar arguments as the $R=1$ case, but with $\varepsilon_* = \frac{\varepsilon}{R}$, leads to the bound 
\begin{align}
    \vert S_{\psi} (t) \vert \leq \frac{1}{1-\frac{R t e^{-t}}{\varepsilon}},
\end{align}
for $t > 2 \log(R/\varepsilon)$. For the ratio, $\vert S_\phi (t)\vert$, we have 
\begin{align}
    \vert S_\phi (t)\vert &=  \frac{2}{1-(1+2 \sqrt{R^2-1})e^{-2 t} } \\
    &\geq \frac{2}{1- e^{-2 t} }\label{eq:S_phi_R_bound}.
\end{align}
As in the $R=1$ case, we have the ratio of the solutions $\vert S_{\psi}(t)\vert  < \vert S_{\phi}(t)\vert$ for $t > 2 \log(R/\varepsilon)$, which means, using an expression equivalent to Eq.~\eqref{eq:overlap_derivative_sign}, the bound of Eq.~\eqref{eq:S_phi_R_bound} can be used 
to give an upper bound on the overlap at $t^* = 3 \log(R/\varepsilon)$ of
\begin{align}
    |\langle \psi (t^*)|\phi (t^*)\rangle| \leq \frac{2 + (1-\varepsilon_*^6)(1-3\varepsilon_*^2 \log(1/\varepsilon_*)))}{\sqrt{4+(1-\varepsilon_*^6)^2} \sqrt{1+(1-3\varepsilon_*^2 \log(1/\varepsilon_*))^2}}.
\end{align}
Since $\varepsilon_* = \frac{\varepsilon}{R} < \varepsilon$, we find 
\begin{align}
    |\langle \psi (t^*)|\phi (t^*)\rangle| < 0.95
\end{align}
for $0 < \varepsilon < e^{-4} \approx 0.018$. As in the case of $R=1$, by using Lemma~\ref{lemma:time_to_reduce_overlap}, we can therefore conclude that there cannot be a polynomial time quantum algorithm to solve this coupled nonlinear differential equation for $R\geq 1$.
\end{proof}

It is insightful to analyse the coupled differential equations used in the proof of Theorem~\ref{theorem:R_greater_than_1} in the linearised regime. Letting $u_1(t) = 1+ x_1(t)$ and $u_2(t) = x_2(t)$ in Eqs.~\eqref{eq:1} and \eqref{eq:2} and discarding nonlinear terms in $x_1$ and $x_2$ we obtain
\begin{align}
    \dot{x}_1 &= x_1 \\
    \dot{x}_2 &= -x_1 -2 x_2 -1
\end{align}
In Fig.~\ref{fig:S_psi_S_phi} we plot $S_\phi^\text{lin}$ and $S_\psi^\text{lin}$ which are defined as before but using the solutions to the linearised equations above. 
$S_\phi^\text{lin}$ happens to be exactly equal to $S_\phi$ for all times. On the other hand, $S_\psi^\text{lin}$ closely follows $S_\psi$ up to a certain time before diverging from it. 
The important thing to note is that $S_\phi$ and $S_\psi$ already differ by a constant amount in a region where the linearised dynamics is still a good approximation. 
This suggests that simulating the corresponding linearised ODE with a quantum computer should also be inefficient. 
This is confirmed by analysing the
eigenvalues of the corresponding $F_1$ matrix, which are $\{-2, 1\}$. Quantum algorithms for linear ODEs assume the real part of the eigenvalues of $F_1$ are nonpositive~\cite{berry_high-order_2014, berry_quantum_2017, childs_quantum_2020} because the dynamics cannot be simulated efficiently if the eigenvalues are positive. This is due to the fact that there are always pairs of trajectories that diverge from each other exponentially quickly. The two solutions of the linearised equations in Fig.~\ref{fig:S_psi_S_phi} are an example of this. 

The linearised approximations of nonlinear systems can show trajectories that diverge rapidly and indicate a complexity bound. In a similar way, in the next section, we study the complexity of quantum algorithms for simulating chaotic and turbulent systems by the Lyapunov exponent, which characterises the rate of separation of close trajectories.

\section{\label{sec:positive_lyapunov_no_advantage}No efficient quantum algorithm for positive Lyapunov exponents}
In this section, we investigate the limitations of quantum algorithms to simulate dynamics of chaotic systems. 

The 2-norm for a matrix $A$ is defined with respect to the standard Euclidean vector norm, 
\begin{align}
    \label{eq:def_matrix_2_norm}
    \Vert A \Vert_2 = \sup_{\bm{v}\ne 0} \frac{ \Vert A \bm{v} \Vert}{\Vert \bm{v}\Vert}.
\end{align}
We define a point in phase space $\bm{u}(t)$ at some time $t$. Trajectories in phase space are determined by solving the general differential equation 
\begin{align}
    \frac{d\bm{u}}{dt} = f(\bm{u}(t),t),
\end{align}
with initial condition at $t=0$. Let the propagator $\bm{F}_\tau (\bm{u})$ update the trajectories by time $\tau$,
\begin{align}
    \bm{F}_\tau(\bm{u}(0)) = \bm{u}(\tau).
\end{align}
Consider initially infinitesimal separations between two points, $\bm{\delta u}(0)$. The change over time of the separation of two points in phase space is the ratio of the Euclidean norm at a time $\tau$ to the norm of the initial separation,
\begin{align}
    \frac{\Vert \bm{\delta u}(\tau) \Vert}{\Vert \bm{\delta u}(0) \Vert} &=  \frac{\Vert \bm{F}_\tau (\bm{u}(0)+\bm{\delta u}(0)) - \bm{F}_\tau(\bm{u}(0))\Vert}{\Vert \bm{\delta u}(0)\Vert}.
\end{align}
The Lyapunov exponent can now be defined~\cite{kuznetsov_stability_2005}.
\begin{definition}
\label{def:ftle}
The Lyapunov exponent is
\begin{align}
    \lambda(\bm{u}(0)) &= \lim_{\tau\rightarrow\infty}\frac{1}{\tau} \log \left( \sup_{\bm{\delta u}(0) \rightarrow 0} \frac{\Vert \bm{\delta u}(\tau) \Vert}{\Vert \bm{\delta u}(0)\Vert} \right) 
\end{align}
where the supremum is taken over all possible directions of $\bm{\delta u}(0)$, and, using the definition of matrix 2-norm in Eq.~\eqref{eq:def_matrix_2_norm}, gives 
\begin{align}
    \label{eq:lyapunov_exponent_definition}
    \lambda(\bm{u}(0)) = \lim_{\tau\rightarrow\infty}\frac{1}{\tau} \log \left( \Vert \nabla \bm{F}_\tau (\bm{u}(0)) \Vert_2 \right).
\end{align}
\end{definition}

The chaotic regime extends to some boundary. This boundary encloses a region of phase space called the chaotic attractor, within which we find chaotic solutions. Typically the basin of the attractor encompasses a significant volume of phase space. Trajectories within the attractor basin cannot leave. It is possible that the attractor is displaced from the coordinate origin.

Consider two trajectories in the chaotic regime with a positive Lyapunov exponent that are initially close to each other. 
Although the phase space points quickly diverge from each other, the distance between the quantum states that encode these points may or may not increase. 
This is because the quantum states are normalized whereas phase space points are not. For example, trajectories can separate if the size of the phase space points grow at different rates, without changing the angle between them. The corresponding quantum states, on the other hand, remain close. The amplitude encoding of Eq.~\eqref{eq:amplitude_encoding} is not useful for discriminating such trajectories and depending on the purpose of the simulation may be inadequate.

A similar situation occurs when the initial phase space points one wants to simulate are far away from the origin. The angular separation for normalised quantum states corresponding to different phase space points decreases with the distance from origin along with ones ability to distinguish the quantum states. As a simple example, consider an attractor as a sphere of radius $s$ and a distance $a_0$ from the origin. Two quantum states in the attractor can only have a maximum angular separation of $\theta_{\textrm{max}} = 2\arctan(s/a_0)$. Regardless of whether the solution is chaotic or not, the number of copies of quantum states required to distinguish two quantum states encoding two nearby trajectories (and hence obtain some information about the dynamics) increases as the centre of the attractor is displaced from the origin.
We therefore consider that placing the origin such that it is surrounded by the phase space region of interest to be the natural coordinate system choice.

With the definition of Lyapunov exponent, which is illustrated in Fig.~\ref{fig:phase_space_plot}, we can make the following statement about solutions of nonlinear differential equations with positive Lyapunov exponents.
\begin{figure}[h]
    \centering
    \includegraphics[scale=0.85]{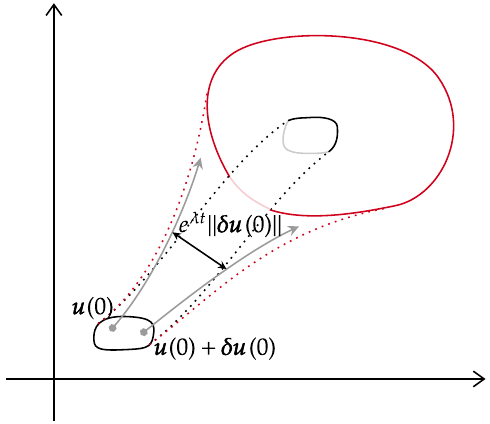}
    \caption{Phase space depiction of a region where two nearby initial conditions diverge exponentially (red) and a regime that is conservative and nearby initial conditions remain close together (black).}
    \label{fig:phase_space_plot}
\end{figure}
\begin{theorem}
\label{theorem:no_algo_for_choas}
Consider a nonlinear differential equation with a coordinate axis such that the origin is surrounded by a chaotic attractor with positive Lyapunov exponent and solution norms that grow sub-exponentially. Any quantum algorithm that produces a quantum state with amplitudes proportional to the solution vector of trajectories within the attractor has complexity that scales exponentially in time.
\end{theorem}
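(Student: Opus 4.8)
The plan is to follow the template of Theorem~\ref{theorem:R_greater_than_1} and reduce to Lemma~\ref{lemma:time_to_reduce_overlap}: for every sufficiently small $\varepsilon>0$ I will produce two trajectories lying inside the attractor whose amplitude-encoded states, via the encoding of Eq.~\eqref{eq:amplitude_encoding}, begin with overlap $1-\varepsilon$ and whose overlap is driven below $2/\sqrt5$ in simulated time $T=O(\log(1/\varepsilon))$; the Lemma then forces any algorithm preparing these states to consume $\Omega(1/\varepsilon)=\exp(\Omega(T))$ copies of the initial state. Concretely, I would pick a base point $\bm{u}^{\phi}(0)$ inside the attractor and let $\bm{\delta u}(0)$ point along a generic (hence, by the multiplicative ergodic property, leading) stretching direction of the tangent flow $\nabla\bm{F}_\tau(\bm{u}^{\phi}(0))$; set $\bm{u}^{\psi}(0)=\bm{u}^{\phi}(0)+\bm{\delta u}(0)$ and tune $\Vert\bm{\delta u}(0)\Vert$ so that the two normalised states have overlap exactly $1-\varepsilon$. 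Because the attractor encircles the origin this stretching direction is not purely radial, so $\Vert\bm{\delta u}(0)\Vert=\Theta(\sqrt\varepsilon)$, and both base points may be taken at the same (nonzero, by the same hypothesis) distance from the origin.

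Next I would convert the positive Lyapunov exponent into exponential growth of the \emph{angular} separation. By Definition~\ref{def:ftle}, for any $0<\mu<\lambda(\bm{u}^{\phi}(0))$ there is $\tau_0$ with $\Vert\nabla\bm{F}_\tau(\bm{u}^{\phi}(0))\Vert_2>e^{\mu\tau}$ for $\tau>\tau_0$, so in the linearised-flow approximation $\Vert\bm{\delta u}(\tau)\Vert\gtrsim e^{\mu\tau}\Vert\bm{\delta u}(0)\Vert$ while the separation is small --- an approximation that is self-consistent precisely up to the time $\tau^{\ast}$ at which $\Vert\bm{\delta u}(\tau)\Vert$ first becomes a fixed constant, namely $\tau^{\ast}=\tfrac1\mu\log\bigl(\Theta(1)/\Vert\bm{\delta u}(0)\Vert\bigr)=O(\log(1/\varepsilon))$. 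The sub-exponential growth hypothesis gives $\tfrac1\tau\log\Vert\bm{u}(\tau)\Vert\to0$ for either trajectory, and the encirclement hypothesis gives $\Vert\bm{u}(\tau)\Vert\ge r_{\min}>0$ (so the encoding of Eq.~\eqref{eq:amplitude_encoding} stays well defined); writing $\bm{\delta u}_\perp(\tau)$ for the component of $\bm{\delta u}(\tau)$ orthogonal to $\bm{u}^{\phi}(\tau)$, the angle between $\bm{u}^{\psi}(\tau)$ and $\bm{u}^{\phi}(\tau)$ obeys $\sin\theta(\tau)\simeq\Vert\bm{\delta u}_\perp(\tau)\Vert/\Vert\bm{u}^{\phi}(\tau)\Vert$. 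Since a full-dimensional attractor wound around the origin cannot keep the stretched separation radial (that would force the two trajectories to be scalar multiples of one another), $\Vert\bm{\delta u}_\perp(\tau)\Vert$ is a $\Theta(1)$ fraction of $\Vert\bm{\delta u}(\tau)\Vert$ over this window, so $\sin\theta(\tau)\gtrsim e^{(\mu-o(1))\tau}\,\Theta(\sqrt\varepsilon)$ until it saturates at a constant. Taking $T$ to be the first time $\cos\theta(T)\le2/\sqrt5$ then gives $T=O(\log(1/\varepsilon))$; since $\varepsilon\to0$ we may assume $\varepsilon<e^{-3}$, and Lemma~\ref{lemma:time_to_reduce_overlap} --- equivalently Lemma~\ref{lemma:helstrom_bound} together with monotonicity of trace distance under the channel mapping $k$ copies of $|\psi_0\rangle,|\phi_0\rangle$ to the algorithm's output --- finishes the proof.

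The hard part is the middle step: making rigorous the passage from the Lyapunov exponent, which is an infinitesimal and asymptotic ($\tau\to\infty$) object, to a statement about a perturbation of finite size $\Theta(\sqrt\varepsilon)$ acting over a window that itself grows like $\log(1/\varepsilon)$, given that finite-time separations along a chaotic orbit need not be monotone. I would control this by (i) taking $\varepsilon$ small enough that $\tau^{\ast}>\tau_0$ and the linearised flow stays accurate up to the $\Theta(1)$ threshold; (ii) using only that $\theta(\tau)$ \emph{attains} a fixed constant at \emph{some} $T=O(\log(1/\varepsilon))$, rather than tracking it precisely; and (iii) using the ``natural coordinate'' hypothesis --- the origin surrounded by the attractor --- precisely to guarantee, as discussed before the theorem, that the exponential phase-space divergence registers as an $\Theta(1)$ angular divergence of the normalised states rather than a purely radial stretch that leaves the encoded states close. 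The remaining ingredients (choosing the stretching direction, tuning $\Vert\bm{\delta u}(0)\Vert$ to hit overlap exactly $1-\varepsilon$, and the elementary geometry relating overlap, angle and trace distance) are routine.
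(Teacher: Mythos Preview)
Your proposal is correct and follows essentially the same route as the paper: pick two nearby initial points in the attractor with encoded overlap $1-\varepsilon$ and $\Vert\bm{\delta u}(0)\Vert=\Theta(\sqrt\varepsilon)$; use the positive Lyapunov exponent to get $\Vert\bm{\delta u}(t)\Vert\sim e^{\lambda t}\Vert\bm{\delta u}(0)\Vert$ in the linearised regime; combine this with sub-exponential growth of $\Vert\bm{u}(t)\Vert$ and the ``origin surrounded by attractor'' hypothesis to drive the overlap below a fixed constant at some $T=O(\log(1/\varepsilon))$; then invoke Lemma~\ref{lemma:time_to_reduce_overlap}.

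The only technical difference worth flagging is how the overlap bound is extracted. You decompose $\bm{\delta u}$ into radial and transverse parts and track $\sin\theta\simeq\Vert\bm{\delta u}_\perp\Vert/\Vert\bm{u}^\phi\Vert$, which costs you the auxiliary claim that $\Vert\bm{\delta u}_\perp\Vert$ remains a $\Theta(1)$ fraction of $\Vert\bm{\delta u}\Vert$. The paper instead writes the overlap directly via the law of cosines,
\[
\langle u^A(t)|u^B(t)\rangle=\frac{\Vert\bm{u}^A(t)\Vert^2+\Vert\bm{u}^B(t)\Vert^2-\Vert\bm{\delta u}(t)\Vert^2}{2\Vert\bm{u}^A(t)\Vert\Vert\bm{u}^B(t)\Vert},
\]
adopts the explicit envelope $\Vert\bm{u}^A(t)\Vert\le e^{t^{1-\alpha}}$, and shows the ratio $\Vert\bm{u}^A(t)\Vert/\Vert\bm{\delta u}(t)\Vert$ is forced down to $1$ at some $t^*=O(\log(1/\Vert\bm{\delta u}(0)\Vert))$; substituting $\Vert\bm{u}^A(t^*)\Vert=\Vert\bm{\delta u}(t^*)\Vert$ into the displayed formula yields overlap $\le 1/2$ immediately (under $\Vert\bm{u}^A\Vert\ge\Vert\bm{u}^B\Vert$), with no radial/transverse split needed. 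This is slightly cleaner bookkeeping of the same geometry; both versions ultimately lean on the encirclement hypothesis at the same informal level you identify in your point~(iii).
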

\begin{proof}
The distance between two classical points on the phase space is $\Vert \bm{u}^A(t) - \bm{u}^B(t) \Vert = \Vert \bm{\delta u}(t) \Vert$, for points $\bm{u}^A(t), \bm{u}^B(t) \in \mathbb{R}^N$ with $N = 2^n$ dimensions, elements 
\begin{align}
    \bm{u}^{A}(t) = (u^{A}_1(t), u^{A}_2(t), \dots, u^{A}_N(t))^\intercal 
\end{align}
and 
\begin{align}
    \bm{u}^{B}(t) = (u^{B}_1(t), u^{B}_2(t), \dots, u^{B}_N(t))^\intercal,
\end{align}
and separation $\bm{\delta u}(t) = \bm{u}^{A}(t) - \bm{u}^{B}(t)$. The norm of the phase space is chosen such that initially we have $\Vert \bm{u}^A(0) \Vert = \Vert \bm{u}^B(0) \Vert = 1$. This is possible by scaling the differential equations and choosing a $\bm{u}^B(0)$ that is only a small angle from $\bm{u}^A(0)$ with respect to the origin. 

The phase space points can be described by the amplitudes of the quantum states $|u^A(t)\rangle$ and $|u^B(t)\rangle$. 
As in the introduction, we assume the amplitude encoding for some orthogonal basis
\begin{equation}
    |u^{A} (t)\rangle = \frac{\sum_{i=1}^{n} u^{A}_i(t) |i\rangle }{\Vert u^{A}(t) \Vert},
\end{equation}
and 
\begin{equation}
    |u^{B} (t)\rangle = \frac{\sum_{i=1}^{n} u^{B}_i(t) |i\rangle }{\Vert u^{B}(t) \Vert}.
\end{equation}
For pure states, the overlap between the two quantum states in terms of the phase space points is 
\begin{align}
    \langle u^A(t) | u^B(t) \rangle &= \frac{\bm{u}^A(t) \cdot \bm{u}^B(t)}{\Vert \bm{u}^A(t) \Vert \Vert \bm{u}^B(t) \Vert} \\
    &= \cos(\theta(t)), \label{eq:overlap_t}
\end{align}
where $\theta(t)$ is the angle between the phase space vectors.

Initially, the phase space vectors are at a separation $\Vert \bm{\delta u}(0) \Vert$ and are a small angle $\theta_0 = \theta(0) \ll \pi/2$ apart. Since the initial magnitudes of $\bm{u}^A(0)$ and $\bm{u}^B(0)$ are unity, the initial separation is 
\begin{align}
    \Vert \bm{\delta u}(0)\Vert^2 &= \| |u^A(0)\rangle-|u^B(0)\rangle\|^2\\ 
    &= 2-2 \langle u^A(0)|u^B(0)\rangle \\
    &= 2-2\cos(\theta_0).
    \label{eq:initial_separation}
\end{align}
Consider a flow (continuous-time evolution) with a positive Lyapunov exponent. Initial points must separate exponentially. In practice, however, we use a finite time Lyapunov exponent $\lambda_t$. After a time $t$, the separation of the points will be of the order of the size of the attractor and no longer increases exponentially. The Lyapunov exponent can therefore be written as
\begin{align}
    \lambda_t = \frac{1}{t} \log\left(\frac{\Vert \bm{\delta u}(t) \Vert}{\Vert \bm{\delta u}(0) \Vert} \right) > 0,
\end{align}
for large enough $t$---when $\Vert \bm{\delta u}(t) \Vert$ approaches the size of the attractor---and we have $\lambda_t \approx \lambda$ of Def.~\ref{def:ftle}. Within the linearised regime for initial times and trajectories, $\Vert \bm{\delta u}(t) \Vert = e^{\lambda_t t} \Vert \bm{\delta u}(0)\Vert$. 
We have 
\begin{align}
    \Vert \bm{\delta u}(t)\Vert^2 &= \Vert \bm{u}^A(t) \Vert^2 + \Vert  \bm{u}^B(t) \Vert^2 - 2 \bm{u}^A(t) \cdot \bm{u}^B(t) \\
    &= \Vert \bm{u}^A(t) \Vert^2 + \Vert  \bm{u}^B(t) \Vert^2 - 2 \langle u^A(t)|u^B(t) \rangle \Vert\bm{u}^A(t) \Vert \Vert\bm{u}^B(t) \Vert.
\end{align}
The quantum state overlap is therefore
\begin{align}
    \langle u^A(t)|u^B(t) \rangle &= \frac{\Vert \bm{u}^A(t) \Vert^2 + \Vert  \bm{u}^B(t) \Vert^2 - e^{2\lambda_t t} \Vert \bm{\delta u}(0)\Vert^{2}}{2 \Vert \bm{u}^A(t) \Vert \Vert \bm{u}^B(t) \Vert}.
\end{align}
For the nonlinear differential equation we are considering, the norms $\Vert \bm{u}^{A}(t) \Vert$ and $\Vert \bm{u}^{B}(t)\Vert$ grow sub-exponentially.
Sub-exponential growth is defined such that the norm can be bounded by 
\begin{align}
    \Vert \bm{u}^A(t)\Vert \leq e^{t^{1-\alpha}},
\end{align}
where $\alpha >0$ is small, positive, and real. The ratio of the norm of the solution to the norm of the separation is therefore bounded by
\begin{align}
    \frac{\Vert \bm{u}^A(t)\Vert}{\Vert\bm{\delta u}(t)\Vert} &\leq \frac{e^{t^{1-\alpha}}}{\Vert\bm{\delta u}(0) \Vert e^{\lambda_t t}} \label{eq:ratio_bound_1} \\
    &= \frac{1}{\Vert\bm{\delta u}(0) \Vert} e^{-t(\lambda_t - t^{-\alpha})}.
\end{align}
The bound on the ratio reaches its maximum value
\begin{align}
    r_{\textrm{max}}(\alpha, \lambda_t) = \frac{1}{\Vert\bm{\delta u}(0) \Vert} e^{-t_\textrm{max}(\lambda_t - t_\textrm{max}^{-\alpha})}, \label{eq:sol_sep_ratio_max}
\end{align}
at $t_\textrm{max} = \left[(1-\alpha)/\lambda_t\right]^{1/\alpha}$. The time to reach the maximum value is only dependent on $\alpha$ and $\lambda_t$, not the initial separation $\Vert \bm{\delta u}(0)\Vert$.
At a later time $t_c$, we have $t_c^{-\alpha} = \lambda_t$, after which the ratio $\Vert \bm{u}^A(t)\Vert/\Vert\bm{\delta u}(t)\Vert$ decays exponentially with time. The time $t_c$ also does not depend on the initial separation $\Vert \bm{\delta u} (0)\Vert$. The ratio can now be bounded by an exponential decay. After the time $t>t_c$, the decay rate of the ratio actually increases with time because the term $t^{-\alpha}$ decreases. We can therefore construct a bound by taking an exponential that passes through the maximum $r_{\textrm{max}}(\alpha, \lambda_t)$ at a time $t>t_c$ with a decay rate corresponding to this time. For numerical simplicity, and the fact that a loose bound is satisfactory, we choose $t=2 t_c$ with a decay rate $\lambda_t(1-2^{-\alpha})$ to find the exponential upper bound,
\begin{align}
    \frac{\Vert \bm{u}^A(t)\Vert}{\Vert\bm{\delta u}(t)\Vert} &\leq  \frac{r_{\textrm{max}}(\alpha, \lambda_t)}{e^{-2\lambda_t^{1-1/\alpha}(1-2^{-\alpha})}}  e^{-\lambda_t (1-2^{-\alpha}) t}\\
    &\leq \frac{c}{\Vert \bm{\delta u}(0)\Vert} e^{-\lambda_t (1-2^{-\alpha}) t}, \label{eq:ratio_upper_bound}
\end{align}
where in the second line we have used Eq.~\eqref{eq:sol_sep_ratio_max} and with 
\begin{align}
    c = \exp\{-t_\textrm{max}(\lambda_t - t_\textrm{max}^{-\alpha})+2\lambda_t^{1-1/\alpha}(1-2^{-\alpha})\}.
\end{align}
Fig.~\ref{fig:chaos_upper_bound} demonstrates the bound for specific $\alpha$ and $\lambda_t$. 
\begin{figure}[t]
    \centering
    \includegraphics[scale=0.38]{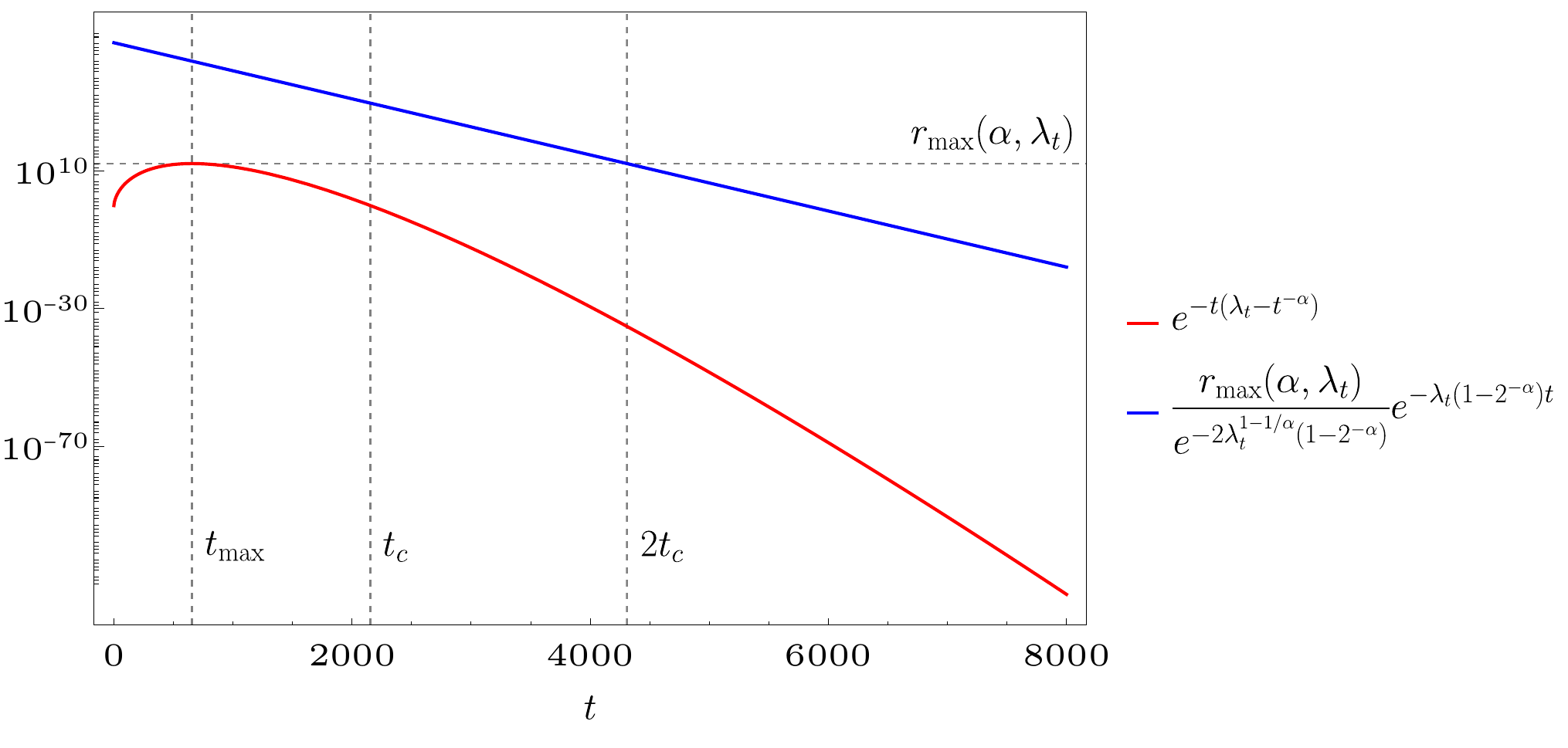}
    \caption{The ratio of Eq.~\eqref{eq:ratio_bound_1} (red) is bounded by the construction of Eq.~\eqref{eq:ratio_upper_bound}
    \label{fig:enter-label} (blue) for an example with $\alpha = 0.3$ and $\lambda_t = 0.1$. The vertical dashed lines show values of $t$ equal to $t_\textrm{max}$, $t_c$, and $2t_c$. The horizontal dashed line is at $r_\textrm{max}(\alpha,\lambda_t)$.}
    \label{fig:chaos_upper_bound}
\end{figure}
We now use the fact that the separation of two trajectories in the chaotic region increases exponentially up to the size of the attractor. Since the chaotic region surrounds the origin, the norm of the separation must approach the norm of the vector. The ratio for the norm is upper bounded by Eq.~\eqref{eq:ratio_upper_bound}, hence the time for the upper bound to reach $1$ gives an upper bound for the time for the ratio $\Vert \bm{u}^{A}(t^*) \Vert/ \Vert \bm{\delta u}(t^*)\Vert$ to reach $1$. Thus, after a time $t^* = O(\log(1/\Vert \bm{\delta u}(0) \Vert))$, we have $\Vert \bm{u}^{A}(t^*) \Vert/ \Vert \bm{\delta u}(t^*) \Vert = 1$. This gives 
\begin{align}
     \Vert \bm{u}^{A}(t^*) \Vert = \Vert \bm{\delta u}(0) \Vert e^{\lambda t^*}. 
\end{align}
Note, we have implicitly assumed that the time $t^* \gg 2 t_c$. $t^*$ increases as the initial separation $\Vert \bm{\delta u}(0)\Vert$ decreases, but $t_c$ does not depend on the initial separation, only the Lyapunov exponent $\lambda_t$ and the sub-exponential bound parameter $\alpha$. Thus, we can always choose a sufficiently small $\Vert \bm{\delta u}(0) \Vert$ such that $t^* \gg 2 t_c$.
Without loss of generality, we assume $\Vert \bm{u}^{A}(t^*) \Vert \geq \Vert \bm{u}^{B}(t^*)\Vert$. After $t^*$, we have 
\begin{align}
    \langle u^A(t^*)|u^B(t^*) \rangle &= \frac{\Vert \bm{u}^B(t^*) \Vert^2}{2 \Vert \bm{u}^A(t^*) \Vert \Vert \bm{u}^B(t^*) \Vert} \\
    &\leq \frac{1}{2}. 
\end{align}
We have a large initial overlap $\langle u^A(0)| u^B(0)\rangle = 1-\varepsilon$, which means, from Eq.~\eqref{eq:initial_separation},
\begin{align}
    \varepsilon = \frac{ \Vert \bm{\delta u}(0)\Vert^2}{2}.
\end{align}
The overlap is reduced to at most $1/2$ in a time $t^* = O(\log(1/\varepsilon))$. We can therefore apply Lemma~\ref{lemma:time_to_reduce_overlap}. A quantum algorithm that gives a quantum state proportional to the solution for a chaotic system with an attractor that surrounds the origin must therefore scale exponentially in time.
\end{proof}

Theorem~\ref{theorem:no_algo_for_choas} does not encompass the result of Theorem~\ref{theorem:R_greater_than_1} because chaos does not exist in two-dimensional systems by Poincaré–Bendixson theorem~\cite{teschl_ordinary_nodate}. In Section~\ref{sec:stable_chaos} we give an example of a dynamical system where, due to Theorem~\ref{theorem:no_algo_for_choas}, we can prove there is no efficient quantum algorithm for producing quantum states proportional to the solution of the system.

\section{\label{sec:stable_chaos}Chaos with a stable fixed point}
Theorem~\ref{theorem:no_algo_for_choas} applies to any quantum algorithm that approximately prepares a normalised quantum state with amplitudes proportional to the solution vector. The QCL algorithm only works for dissipative systems with the real part of the eigenvalues of the linear $F_1$ term being strictly negative, $\textrm{Re}(\lambda_N) < 0$. We construct a system of differential equations that show chaotic solutions for a large region of initial conditions but also satisfy this condition on $F_1$ eigenvalues---a stable fixed point that exhibits chaos. By changing only the initial condition, the behaviour of the system changes from a chaotic solution to a solution that decays smoothly. We prove that no quantum algorithm can efficently output a state proportional to the solution of a differential equation with chaos or turbulence when the solution norms grow sub-exponentially. Numerical simulation shows that the QCL algorithm can efficiently solve the smoothly decaying solution. While Ref.~\cite{liu_efficient_2021} proves that systems with $R<1$ can be efficiently solved, it does not rule out any $R\geq1$ value. In particular, Ref.~\cite{liu_efficient_2021} found an example with $R\approx 44$ for which Carleman linearisation error appears to be bounded for arbitrary time. In the following example, however, we show how the Lyapunov exponent can be used to rule out efficient quantum simulation for a regime with $R\geq 1$. 

We introduce a slightly modified system of differential equations for 
\begin{align}
    \bm{u}(t) = (x(t),y(t),z(t)) \in \mathbb{R}^3
\end{align}
proposed in Ref.~\cite{m_d_dynamical_2021},
\begin{equation}
\begin{gathered}
    \frac{dx}{dt} = y \\
    \frac{dy}{dt} = z \\
    \frac{dz}{dt} = -x -(1-k)y - z -2.3z^2 + xy +k,
\end{gathered}
\end{equation}
with $k\in \left[-1,0\right)$.
This system exhibits chaos with a stable fixed point at $\bm{u}_\mathrm{FP} = (k,0,0)$. The linear term is given by
\begin{equation}
    F_1 = \begin{pmatrix}
        0 & 1 & 0 \\
        0 & 0 & 1 \\
        -1 & k-1 & -1
    \end{pmatrix},
\end{equation}
which has eigenvalues $\{ \lambda_1, \lambda_2, \lambda_2^*\}$ and corresponding eigenvectors $\{ \bm{v}_1, \bm{v}_2, \bm{v}_2^* \}$, dependent on the choice of $k$. The maximum real part of the eigenvalue of $F_1$ is strictly negative for the stated domain of $k$ and has norm $\vert \mathrm{Re}(\lambda_2^*)\vert = \vert \mathrm{Re}(\lambda_2)\vert = \vert k\vert/4$. The eigenvalue $\lambda_1$, with eigenvector $\bm{v}_1$, has only a real part and is indicative of a decaying solution component.

The maximum singular value of $F_2$ gives $\Vert F_2\Vert = 2.51$ and $\Vert F_0\Vert = \vert k\vert$. Therefore $R$ is given by
\begin{align}
    R = \frac{10.04}{\vert k\vert} \Vert \bm{u}(0) \Vert + \frac{4}{\Vert \bm{u}(0) \Vert}.
\end{align}
We consider two explicit initial conditions: (i) the chaotic initial condition, 
\begin{align}
    \label{eq:A_initial_condition}
    \bm{u}^{\textrm{chaos}}(0) = \begin{pmatrix}
        k-1 \\ 0 \\ 0
    \end{pmatrix}
\end{align}
and (ii) the stable initial condition, using a small perturbation by $\xi = 5\times10^{-3}$ in the direction of the decaying eigenvector from the fixed point, 
\begin{align}
    \label{eq:B_initial_condition}
    \bm{u}^{\textrm{decay}}(0) = \begin{pmatrix}
    k + \xi \bm{e}_1 \cdot \bm{v}_1 \\ 
    \xi \bm{e}_2 \cdot \bm{v}_1 \\ 
    \xi \bm{e}_3 \cdot \bm{v}_1
    \end{pmatrix},
\end{align}
where $\bm{e}_1 = (1,0,0)$, $\bm{e}_2 = (0,1,0)$, and $\bm{e}_3 = (0,0,1)$.
For $k=-10^{-4}$, we find $R^{\textrm{chaos}} = 1.00 \times 10^5$ and $R^{\textrm{decay}} = 1.305 \times 10^3$. It is therefore not known a priori for these initial conditions whether the error of the QCL algorithm remains bounded for all times since $R>1$. However, using the results of Section~\ref{sec:positive_lyapunov_no_advantage}, we can prove that no quantum algorithm can efficiently give a quantum state proportional to the solution for all initial conditions within the chaotic attractor for initial condition (i). 
\begin{figure*}
    \centering
    \includegraphics[scale=0.407]{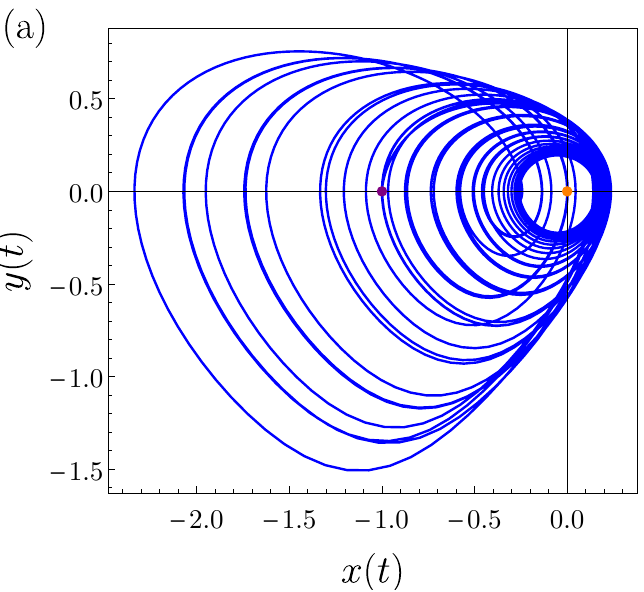}
    \includegraphics[scale=0.407]{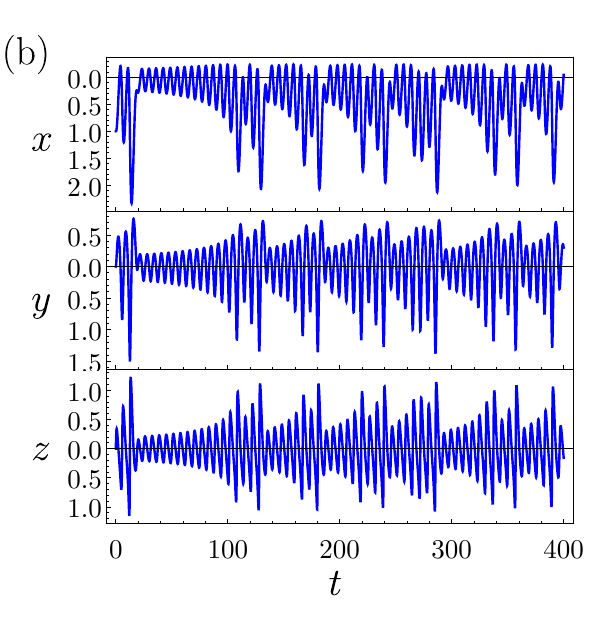}
    \includegraphics[scale=0.407]{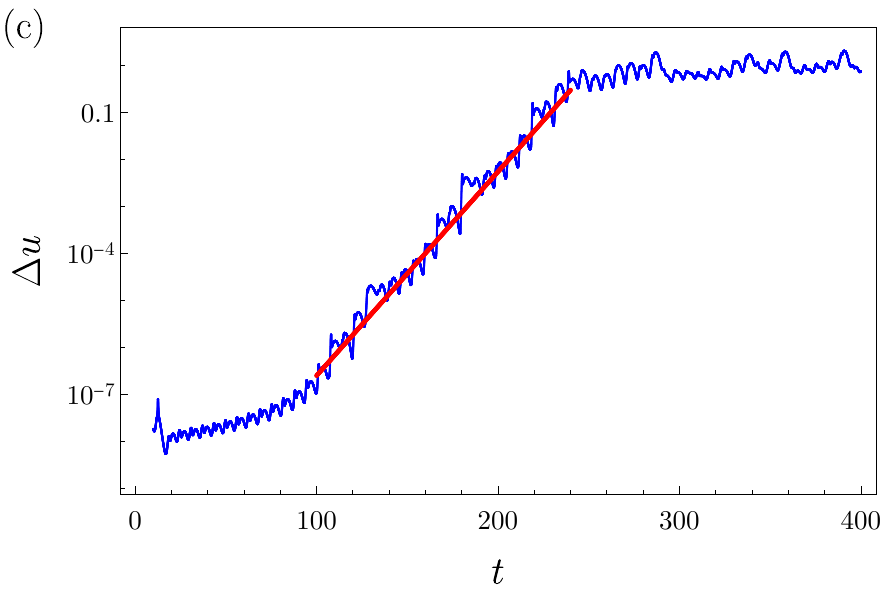}
    \caption{The solution up to time t=400 for initial condition $\bm{u}^{\textrm{chaos}}(0) = (k-1,0,0)$ for $k=-10^{-4}$. (a) shows a parametric plot of the trajectory of the $x$-$y$ plane over time with the initial condition marked as a purple dot and the fixed point as an orange dot. (b) plots the chaotic behaviour of the solution components of $\bm{u}^{\textrm{chaos}}(t)$. (c) plots the average of 100 samples of $\Delta u^{\textrm{chaos}}(t) = \Vert\bm{\Delta u}^{\textrm{chaos}}(t)\Vert$, where $\bm{\Delta u}^{\textrm{chaos}}(t) = \bm{F}_t (\bm{u}^{\textrm{chaos}}(0)) - \bm{F}_t (\bm{u}_\varepsilon$), $\bm{F}_t(\bm{u}(0))$ is the flow function to update from the initial state to the state at time $t$, and $\bm{u}_\varepsilon = \bm{u}(0) + \bm{\varepsilon}$ with $\bm{\varepsilon}$ a random vector with magnitude $\vert \bm{\varepsilon} \vert = 10^{-8}$. The red line is an approximate fit of the function $\sim e^{\lambda t}$ over the period of exponential separation of trajectories from close initial states, giving a Lyapunov exponent of $\lambda = 0.1$.}
    \label{fig:chaotic_solution}
\end{figure*}
Fig.~\ref{fig:chaotic_solution}(a) shows the $x(t)$-$y(t)$ phase plot of the chaotic solution and Fig.~\ref{fig:chaotic_solution}(b) shows how each component of the solution varies up to time $t=400$. Fig.~\ref{fig:chaotic_solution}(c) finds the Lyapunov exponent for the solution to be approximately 0.1. The solution norm does not increase exponentially because the eigenvalues of the system are negative in the range of $k$. Thus, we can apply the result of Theorem~\ref{theorem:no_algo_for_choas}.
\begin{figure*}
    \centering
    \includegraphics[scale=0.302]{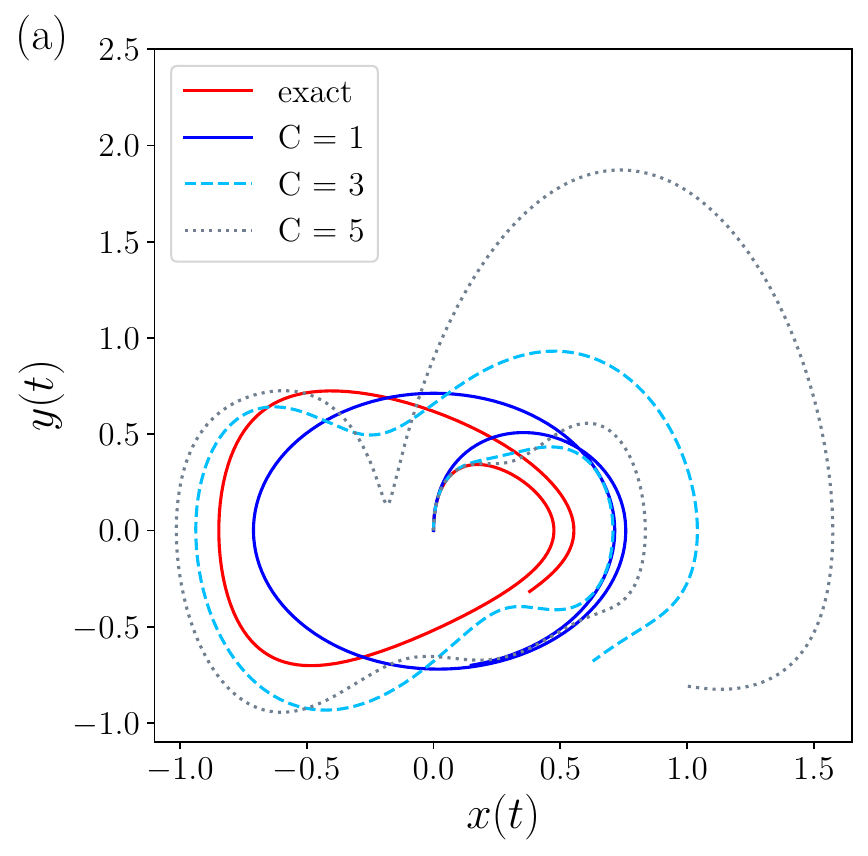}
    \includegraphics[scale=0.302]{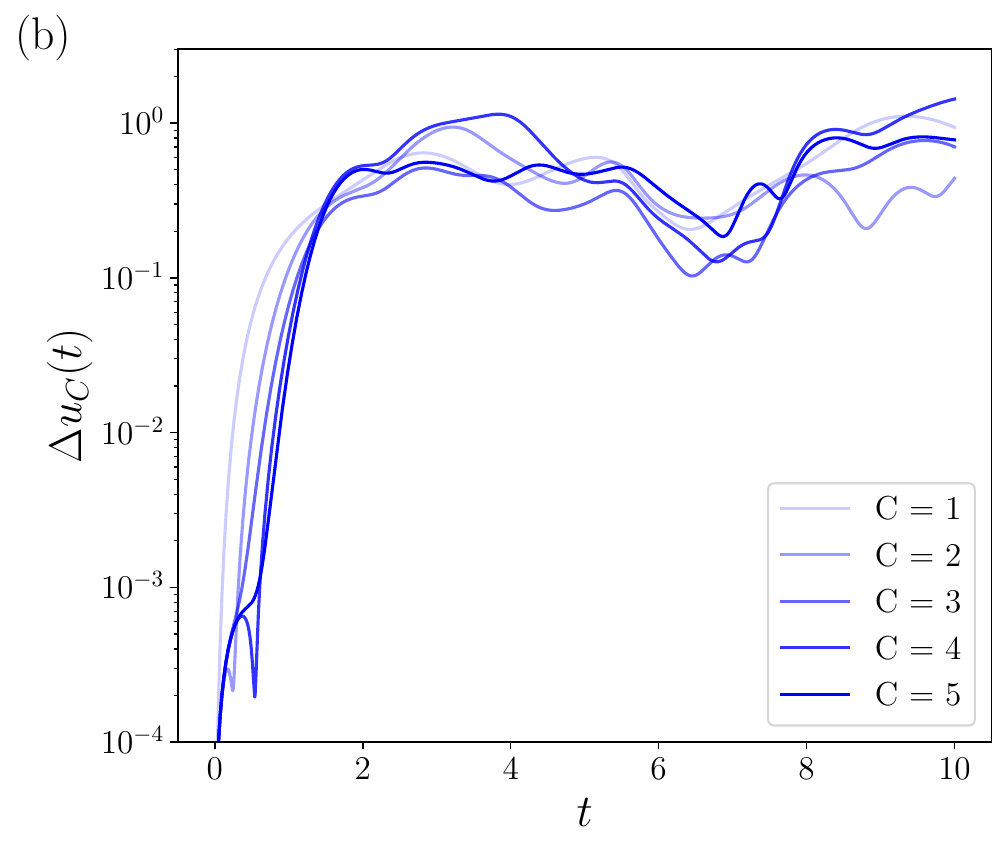}
    \includegraphics[scale=0.302]{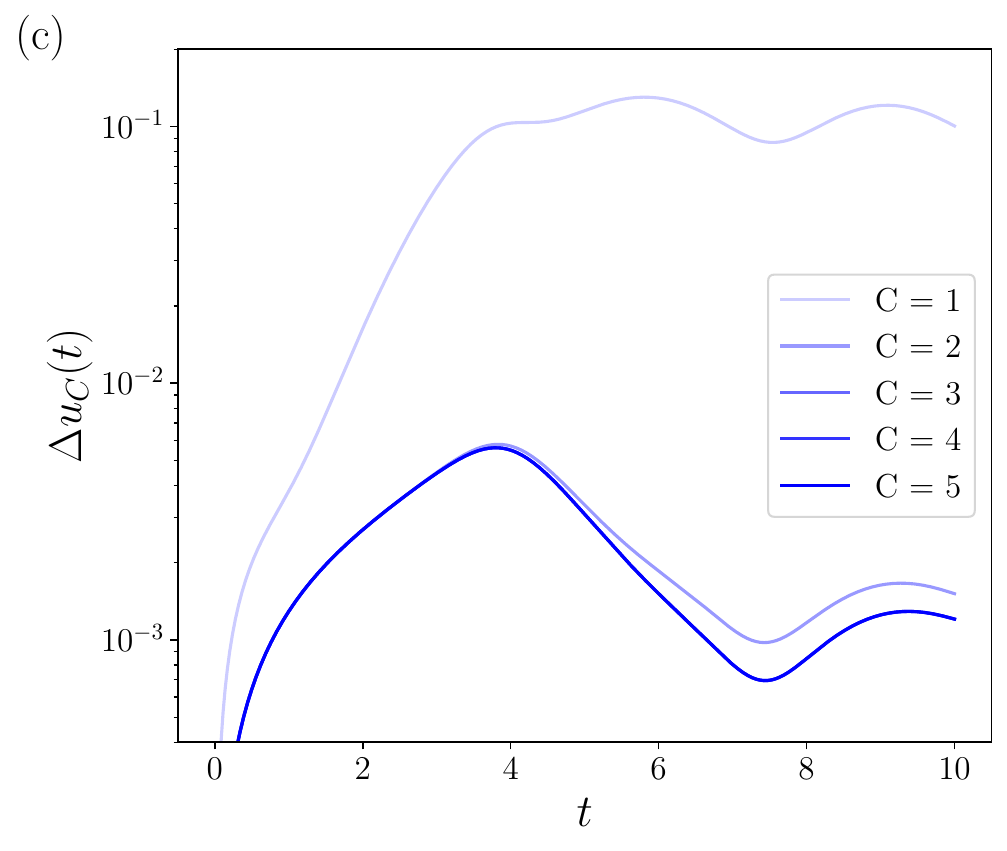}
    \caption{Carleman linearisation up to time t=10 for initial conditions (i) of Eq.~\eqref{eq:A_initial_condition} and (ii) of Eq.~\eqref{eq:B_initial_condition} with $k = -10^{-4}$. (a) shows the phase portrait for the \emph{exact} solution (high precision numerical calculation using Mathematica) compared to Carleman linearisation solutions for truncation orders $C=1,3,5$ for initial condition (i). (b) shows the error from the exact solution for  Carleman truncation order up to $C=5$ for the initial condition (i); the error from the exact solution is computed by $\Delta u_C (t) = \left\Vert \frac{\bm{u}_C(t)}{\Vert \bm{u}_C(t) \Vert} - \frac{\bm{u}(t)}{\Vert \bm{u}(t) \Vert} \right\Vert$, where $\bm{u}(t)$ is the exact solution and $\bm{u}_C(t)$ is the Carleman linearisation solution for truncation order $C$. The error is calculated using the norms of the exact solution and the Carleman linearisation solution as this is the error relevant for the output of the QCL algorithm. (c) shows the error from the exact solution for  Carleman truncation order up to $C=5$ for the initial condition (ii). After a time $t=8$, the error slightly rises and oscillates as the eigenvectors of the periodic solutions begin to contribute on the slow manifold as the solution decays more slowly towards the fixed point.}
    \label{fig:carleman_linearisation}
\end{figure*}

Fig.~\ref{fig:carleman_linearisation} shows the Carleman linearisation of the QCL algorithm. As predicted, the error for the chaotic initial condition (i) does not decrease by increasing Carleman truncation number $C$ and the error very quickly becomes unbounded. On the other hand, we find that for the decaying initial condition of (ii) there is bounded error for all times even for low truncation order, which suggests the QCL algorithm applies initially despite the large value of $R$. We note that after a time of about $t=8$ the error rises slightly and oscillates as the eigenvectors with complex eigenvalues become relevant as the solution decays on the slow manifold.

\section{\label{sec:conc}Discussion}
We have shown that any quantum algorithm that approximates a state proportional to the solution vector of a nonlinear differential equation (with $\textrm{Re}(\lambda_N) < 0$) is not efficient in the worst case for $R \geq 1$. This is achieved by finding a particular coupled differential equation that cannot be solved efficiently on a quantum computer for any $R \geq 1$. The proof shows that if it were possible to find the solution of these nonlinear differential equations then quantum states could be discriminated with fewer resources than required by known lower-bounds on state discrimination task. The intuition for the result stems from the fact that quantum mechanics is linear and quantum operations cannot increase the trace distance between states. Simulating significant nonlinearity efficiently is therefore bounded by these constraints. 

However, determining systems or regimes in systems that are \emph{too nonlinear} is not trivial. Liu et al.~\cite{liu_efficient_2021} introduced the quantity $R$, a Reynolds-like number, that allows the statement that all equations with $R<1$ can be solved efficiently. However, the inverse statement that $R\geq1$ cannot be solved efficiently is not true. In fact, there are examples for systems of even very large $R$, that can be solved with exponential advantage over current classical algorithms and in Section~\ref{sec:stable_chaos} we find a regime with $R = 1.305\times10^3$ that permits Carleman linearisation for arbitrary time. However, we also note that the value of $R$ is dependent on the choice of discretisation. An alternative definition of $R$ has been found for reaction-diffusion equations~\cite{liu_efficient_2023} that removes the effect of discretisation. In this case, the nonlinear terms do not contain derivatives.

A relevant question for solving differential equations on quantum computers is whether there are larger classes of differential equations that can be proven to be not efficiently solvable. A clear first place to look is differential equations with chaos, that is having a positive Lyapunov exponent. Chaos could be considered extremely \emph{nonlinear} since close initial points diverge exponentially fast. In light of this, we prove that, given a natural coordinate system, there cannot be an efficient quantum algorithm that outputs a quantum state proportional to a solution if the solution grows sub-exponentially and exhibits chaos. In Section~\ref{sec:stable_chaos}, we find a system that exhibits chaos for most initial conditions. However, we show that there is also an initial condition that does not exhibit chaos and permits Carleman linearisation with small error, see Fig.~\ref{fig:carleman_linearisation}(c). While all eigenvalues of $F_1$ must be negative to apply the QCL algorithm, it is also possible to have chaos in this regime as demonstrated in Section~\ref{sec:stable_chaos}. In Ref.~\cite{krovi_improved_2023}, the log-norm, defined as the maximum eigenvalue of $\frac{1}{2}(F_1 + F_1^\dagger)$, can also be positive despite all eigenvalues of $F_1$ being negative -- the log-norm can be thought of as providing a bound on the growth of solutions for early times. Chaotic solutions require that at all times solution trajectories separate exponentially (up to the size of the attractor basin), which could indicate a positive log-norm. The system in Section~\ref{sec:stable_chaos} has this feature, where the log-norm is positive but all eigenvalues of $F_1$ are negative. It is possible the log-norm being positive is always a feature of these chaotic systems with stable fixed points.

A further question is whether introducing measurement in the algorithm can simulate chaos or turbulence. Incorporating measurement, and the measurement outcomes, is precisely what allows a general nonlinear transformation in a quantum circuit~\cite{holmes_nonlinear_2023}, which requires exponential scaling of copies of the quantum state for increasing circuit depth. Given the assumptions stated for Theorem~\ref{theorem:no_algo_for_choas}, our proof does indeed rule out the possibility of a polynomial number of copies of the quantum state to simulate chaos or turbulence. Measurement is a state projection that turns amplitudes into statistical uncertainties. Thus, measurement cannot help in distinguishing $\varepsilon$-close states and overcoming the bound for state separation, given in Lemma~\ref{lemma:helstrom_bound}, without the same scaling in the number of copies.

In some cases, it is possible that a nonlinear system can be transformed to a linear system via a classical or analytical mapping. However, chaos and turbulence cannot exist in linear systems. The map itself must be nonlinear, in general, being dependent on the complete current state of the system at all times. All the complexity of the problem would therefore be contained in the map. Moreover, from the perspective of quantum computing, in such an approach one would be solving a linear system using a quantum computer. When we refer to limitations for quantum algorithms to solve turbulent and chaotic systems, we exclude this possibility of reducing the dynamics to linear with classical preprocessing.

We propose that simulation of the classical dynamics of turbulent or chaotic systems is likely not amenable to an exponential quantum advantage over current classical algorithms. This work does not preclude potential for polynomial speedup for quantum algorithms over classical algorithms, for example, turbulent mixing with Monte Carlo simulation has been found to permit a quadratic advantage~\cite{xu_turbulent_2018}. Future work could also investigate nonlinear differential equations with no restrictions on the eigenvalues of $F_1$, however, it is unlikely in general that linearisation of these systems for arbitrary time would be possible since the solutions could be exponentially growing. The case of negative Lyapunov exponents can also be considered. Our results do not imply that all systems with negative maximum Lyapunov exponent would necessarily be amenable to an efficient quantum algorithm. In fact, we consider this unlikely. Future work could also determine further properties of the solutions of PDEs that allow or preclude efficient quantum algorithms.

\section{Acknowledgements}
We thank the anonymous referees for careful reading of our paper, their helpful suggestions, and identifying a technical error in Theorem~\ref{theorem:R_greater_than_1}.
We acknowledge support from the Beyond Moore’s Law project of the Advanced Simulation and Computing Program (ASC) at LANL. DL acknowledges support from the EPSRC Centre for Doctoral Training in Delivering Quantum Technologies, grant ref. EP/S021582/1 and the Quantum Computing Summer School 2022 at Los Alamos National Laboratory (LANL). BN was supported by the US Department of Energy's Scientific Discovery through Advanced Computing (SciDAC) program under projects (a) Non-Hydrostatic Dynamics with Multi-Moment Characteristic Discontinuous Galerkin Methods and (b) Improving Projections of AMOC and its Collapse Through advanced Simulations and the Advanced Machine Learning project of the ASC program at LANL. LA-UR-24-24916.


\begin{thebibliography}{10}
	
	\bibitem{klingenberg_grand_2013}
	Christian~F. Klingenberg.
	\newblock ``Grand challenges in computational physics''.
	\newblock \href{https://dx.doi.org/10.3389/fphy.2013.00002}{Frontiers in
		Physics {\bf 1}}~(2013).
	
	\bibitem{harrow_quantum_2009}
	Aram~W. Harrow, Avinatan Hassidim, and Seth Lloyd.
	\newblock ``Quantum algorithm for solving linear systems of equations''.
	\newblock \href{https://dx.doi.org/10.1103/PhysRevLett.103.150502}{Physical
		Review Letters {\bf 103}, 150502}~(2009).
	
	\bibitem{ambainis_variable_2010}
	Andris Ambainis.
	\newblock ``Variable time amplitude amplification and a faster quantum
	algorithm for solving systems of linear equations''~(2010).
	\newblock  \href{http://arxiv.org/abs/1010.4458}{arXiv:1010.4458}.
	
	\bibitem{childs_quantum_2017}
	Andrew~M. Childs, Robin Kothari, and Rolando~D. Somma.
	\newblock ``Quantum {Algorithm} for {Systems} of {Linear} {Equations} with
	{Exponentially} {Improved} {Dependence} on {Precision}''.
	\newblock \href{https://dx.doi.org/10.1137/16M1087072}{SIAM Journal on
		Computing {\bf 46}, 1920--1950}~(2017).
	
	\bibitem{wossnig_quantum_2018}
	Leonard Wossnig, Zhikuan Zhao, and Anupam Prakash.
	\newblock ``Quantum {Linear} {System} {Algorithm} for {Dense} {Matrices}''.
	\newblock \href{https://dx.doi.org/10.1103/PhysRevLett.120.050502}{Physical
		Review Letters {\bf 120}, 050502}~(2018).
	
	\bibitem{berry_high-order_2014}
	Dominic~W. Berry.
	\newblock ``High-order quantum algorithm for solving linear differential
	equations''.
	\newblock \href{https://dx.doi.org/10.1088/1751-8113/47/10/105301}{Journal of
		Physics A: Mathematical and Theoretical {\bf 47}, 105301}~(2014).
	
	\bibitem{berry_quantum_2017}
	Dominic~W. Berry, Andrew~M. Childs, Aaron Ostrander, and Guoming Wang.
	\newblock ``Quantum algorithm for linear differential equations with
	exponentially improved dependence on precision''.
	\newblock \href{https://dx.doi.org/10.1007/s00220-017-3002-y}{Communications in
		Mathematical Physics {\bf 356}, 1057--1081}~(2017).
	
	\bibitem{costa_quantum_2019}
	Pedro C.~S. Costa, Stephen Jordan, and Aaron Ostrander.
	\newblock ``Quantum {Algorithm} for {Simulating} the {Wave} {Equation}''.
	\newblock \href{https://dx.doi.org/10.1103/PhysRevA.99.012323}{Physical Review
		A {\bf 99}, 012323}~(2019).
	
	\bibitem{childs_quantum_2020}
	Andrew~M. Childs and Jin-Peng Liu.
	\newblock ``Quantum spectral methods for differential equations''.
	\newblock \href{https://dx.doi.org/10.1007/s00220-020-03699-z}{Communications
		in Mathematical Physics {\bf 375}, 1427--1457}~(2020).
	
	\bibitem{krovi_improved_2023}
	Hari Krovi.
	\newblock ``Improved quantum algorithms for linear and nonlinear differential
	equations''.
	\newblock \href{https://dx.doi.org/10.22331/q-2023-02-02-913}{Quantum {\bf 7},
		913}~(2023).
	
	\bibitem{clader_preconditioned_2013}
	B.~D. Clader, B.~C. Jacobs, and C.~R. Sprouse.
	\newblock ``Preconditioned quantum linear system algorithm''.
	\newblock \href{https://dx.doi.org/10.1103/PhysRevLett.110.250504}{Physical
		Review Letters {\bf 110}, 250504}~(2013).
	
	\bibitem{cao_quantum_2013}
	Yudong Cao, Anargyros Papageorgiou, Iasonas Petras, Joseph Traub, and Sabre
	Kais.
	\newblock ``Quantum algorithm and circuit design solving the {Poisson}
	equation''.
	\newblock \href{https://dx.doi.org/10.1088/1367-2630/15/1/013021}{New Journal
		of Physics {\bf 15}, 013021}~(2013).
	
	\bibitem{montanaro_quantum_2016}
	Ashley Montanaro and Sam Pallister.
	\newblock ``Quantum algorithms and the finite element method''.
	\newblock \href{https://dx.doi.org/10.1103/PhysRevA.93.032324}{Physical Review
		A {\bf 93}, 032324}~(2016).
	
	\bibitem{engel_quantum_2019}
	Alexander Engel, Graeme Smith, and Scott~E. Parker.
	\newblock ``Quantum {Algorithm} for the {Vlasov} {Equation}''.
	\newblock \href{https://dx.doi.org/10.1103/PhysRevA.100.062315}{Physical Review
		A {\bf 100}, 062315}~(2019).
	
	\bibitem{arrazola_quantum_2019}
	Juan~Miguel Arrazola, Timjan Kalajdzievski, Christian Weedbrook, and Seth
	Lloyd.
	\newblock ``Quantum algorithm for non-homogeneous linear partial differential
	equations''.
	\newblock \href{https://dx.doi.org/10.1103/PhysRevA.100.032306}{Physical Review
		A {\bf 100}, 032306}~(2019).
	
	\bibitem{linden_quantum_2022}
	Noah Linden, Ashley Montanaro, and Changpeng Shao.
	\newblock ``Quantum vs. {Classical} {Algorithms} for {Solving} the {Heat}
	{Equation}''.
	\newblock \href{https://dx.doi.org/10.1007/s00220-022-04442-6}{Communications
		in Mathematical Physics {\bf 395}, 601--641}~(2022).
	
	\bibitem{childs_high-precision_2021}
	Andrew~M. Childs, Jin-Peng Liu, and Aaron Ostrander.
	\newblock ``High-precision quantum algorithms for partial differential
	equations''.
	\newblock \href{https://dx.doi.org/10.22331/q-2021-11-10-574}{Quantum {\bf 5},
		574}~(2021).
	
	\bibitem{liu_efficient_2021}
	Jin-Peng Liu, Herman~Øie Kolden, Hari~K. Krovi, Nuno~F. Loureiro, Konstantina
	Trivisa, and Andrew~M. Childs.
	\newblock ``Efficient quantum algorithm for dissipative nonlinear differential
	equations''.
	\newblock \href{https://dx.doi.org/10.1073/pnas.2026805118}{Proceedings of the
		National Academy of Sciences {\bf 118}, e2026805118}~(2021).
	
	\bibitem{xue_quantum_2021}
	Cheng Xue, Yu-Chun Wu, and Guo-Ping Guo.
	\newblock ``Quantum homotopy perturbation method for nonlinear dissipative
	ordinary differential equations''.
	\newblock \href{https://dx.doi.org/10.1088/1367-2630/ac3eff}{New Journal of
		Physics {\bf 23}, 123035}~(2021).
	
	\bibitem{liu_efficient_2023}
	Jin-Peng Liu, Dong An, Di~Fang, Jiasu Wang, Guang~Hao Low, and Stephen Jordan.
	\newblock ``Efficient {Quantum} {Algorithm} for {Nonlinear}
	{Reaction}–{Diffusion} {Equations} and {Energy} {Estimation}''.
	\newblock \href{https://dx.doi.org/10.1007/s00220-023-04857-9}{Communications
		in Mathematical Physics {\bf 404}, 963--1020}~(2023).
	
	\bibitem{leyton_quantum_2008}
	Sarah~K. Leyton and Tobias~J. Osborne.
	\newblock ``A quantum algorithm to solve nonlinear differential
	equations''~(2008).
	\newblock  \href{http://arxiv.org/abs/0812.4423}{arXiv:0812.4423}.
	
	\bibitem{abrams_nonlinear_1998}
	Daniel~S. Abrams and Seth Lloyd.
	\newblock ``Nonlinear quantum mechanics implies polynomial-time solution for
	{NP}-complete and \#{P} problems''.
	\newblock \href{https://dx.doi.org/10.1103/PhysRevLett.81.3992}{Physical Review
		Letters {\bf 81}, 3992--3995}~(1998).
	
	\bibitem{childs_optimal_2016}
	Andrew~M. Childs and Joshua Young.
	\newblock ``Optimal state discrimination and unstructured search in nonlinear
	quantum mechanics''.
	\newblock \href{https://dx.doi.org/10.1103/PhysRevA.93.022314}{Physical Review
		A {\bf 93}, 022314}~(2016).
	
	\bibitem{lloyd_quantum_2020}
	Seth Lloyd, Giacomo De~Palma, Can Gokler, Bobak Kiani, Zi-Wen Liu, Milad
	Marvian, Felix Tennie, and Tim Palmer.
	\newblock ``Quantum algorithm for nonlinear differential equations''~(2020).
	\newblock  \href{http://arxiv.org/abs/2011.06571}{arXiv:2011.06571}.
	
	\bibitem{forets_explicit_2017}
	Marcelo Forets and Amaury Pouly.
	\newblock ``Explicit {Error} {Bounds} for {Carleman} {Linearization}''~(2017).
	\newblock  \href{http://arxiv.org/abs/1711.02552}{arXiv:1711.02552}.
	
	\bibitem{hernandez-bermejo_algebraic_1998}
	Benito Hernández-Bermejo, Victor Fairén, and Léon Brenig.
	\newblock ``Algebraic recasting of nonlinear systems of {ODEs} into universal
	formats''.
	\newblock \href{https://dx.doi.org/10.1088/0305-4470/31/10/016}{Journal of
		Physics A: Mathematical and General {\bf 31}, 2415--2430}~(1998).
	
	\bibitem{gu_qlmor_2011}
	Chenjie Gu.
	\newblock ``{QLMOR}: {A} {Projection}-{Based} {Nonlinear} {Model} {Order}
	{Reduction} {Approach} {Using} {Quadratic}-{Linear} {Representation} of
	{Nonlinear} {Systems}''.
	\newblock \href{https://dx.doi.org/10.1109/TCAD.2011.2142184}{IEEE Transactions
		on Computer-Aided Design of Integrated Circuits and Systems {\bf 30},
		1307--1320}~(2011).
	
	\bibitem{helstrom_quantum_1969}
	Carl~W. Helstrom.
	\newblock ``Quantum detection and estimation theory''.
	\newblock \href{https://dx.doi.org/10.1007/BF01007479}{Journal of Statistical
		Physics {\bf 1}, 231--252}~(1969).
	
	\bibitem{kuznetsov_stability_2005}
	N.V. Kuznetsov and G.A. Leonov.
	\newblock ``On stability by the first approximation for discrete systems''.
	\newblock In Proceedings. 2005 {International} {Conference} {Physics} and
	{Control}, 2005.
	\newblock \href{https://dx.doi.org/10.1109/PHYCON.2005.1514053}{Pages
		596--599}.
	\newblock ~(2005).
	
	\bibitem{teschl_ordinary_nodate}
	Gerald Teschl.
	\newblock ``Ordinary {Differential} {Equations} and {Dynamical} {Systems}''.
	\newblock \href{https://dx.doi.org/10.1090/gsm/140}{Graduate Studies in
		Mathematics, Volume 140, Amer. Math. Soc.}~(2012).
	
	\bibitem{m_d_dynamical_2021}
	Vijayakumar M.~D., Anitha Karthikeyan, Jozef Zivcak, Ondrej Krejcar, and
	Hamidreza Namazi.
	\newblock ``Dynamical {Behavior} of a {New} {Chaotic} {System} with {One}
	{Stable} {Equilibrium}''.
	\newblock \href{https://dx.doi.org/10.3390/math9243217}{Mathematics {\bf 9},
		3217}~(2021).
	
	\bibitem{holmes_nonlinear_2023}
	Zoë Holmes, Nolan~J. Coble, Andrew~T. Sornborger, and Yiğit Subaşı.
	\newblock ``Nonlinear transformations in quantum computation''.
	\newblock \href{https://dx.doi.org/10.1103/PhysRevResearch.5.013105}{Physical
		Review Research {\bf 5}, 013105}~(2023).
	
	\bibitem{xu_turbulent_2018}
	Guanglei Xu, Andrew~J. Daley, Peyman Givi, and Rolando~D. Somma.
	\newblock ``Turbulent {Mixing} {Simulation} via a {Quantum} {Algorithm}''.
	\newblock \href{https://dx.doi.org/10.2514/1.J055896}{AIAA Journal {\bf 56},
		687--699}~(2018).
	
\end{thebibliography}
\end{document}